\newcounter{item}
\numberwithin{equation}{section}
\begin{document}

\title{Stability and analytic expansions of local solutions of systems
  of quadratic BSDEs with applications to a price impact
  model\thanks{We are grateful to Kostas Kardaras and two anonymous
    reviewers for their insightful comments on earlier versions of the
    paper.}}

\author{Dmitry Kramkov\thanks{The author also holds a part-time
    position at the University of Oxford. This research was supported
    in part by the Carnegie Mellon-Portugal Program and by the
    Oxford-Man Institute for Quantitative Finance at the University of Oxford.} \\
  Carnegie Mellon University, \\ Department of Mathematical Sciences, \\
  5000 Forbes Avenue, Pittsburgh, PA, 15213-3890, US \\
  (kramkov@cmu.edu) \and Sergio Pulido\thanks{The author's research
    benefited from the support of the ``Chair Markets in Transition"
    under the aegis of Louis Bachelier laboratory, a joint initiative
    of \'Ecole Polytechnique, Universit\'e d'\'Evry-Val-d'Essonne and
    F\'ed\'eration Bancaire Fran\c caise, Labex ANR 11-LABX-0019. This research has also
    received funding from the European Research Council under the
    European Union's Seventh Framework Programme (FP/2007-2013) / ERC
    Grant Agreement n.~307465-POLYTE.}\\ Laboratoire de
  Math\'ematiques et Mod\'elisation d'\'Evry (LaMME),\\ Universit\'e
  d'\'Evry-Val-d'Essonne, ENSIIE, UMR CNRS 8071,\\
  IBGBI, 23 Boulevard de France, 91037 \'Evry Cedex, France\\
  (sergio.pulidonino@ensiie.fr)} \date{\today}

\maketitle
\pagebreak
\begin{abstract}
  We obtain stability estimates and derive analytic expansions for
  local solutions of multidimensional quadratic backward stochastic
  differential equations. We apply these results to a financial model
  where the prices of risky assets are quoted by a representative
  dealer in such a way that it is optimal to meet an exogenous
  demand. We show that the prices are stable under the demand process
  and derive their analytic expansions for small risk aversion
  coefficients of the dealer.
\end{abstract}

\begin{description}
\item[Keywords:] multidimensional quadratic BSDEs, stability of
  quadratic BSDEs, asymptotic behavior of quadratic BSDEs, liquidity,
  price impact.
\item[AMS Subject Classification (2010):] 60H10, 91B24, 91G80.
\item[JEL Classification:] D53, G12, C62.
\end{description}

\section{Introduction}
\label{sec:intro}

One-dimensional backward stochastic differential equations (BSDEs)
with quadratic growth are well studied. Existence, uniqueness, and
stability of their solutions for bounded terminal conditions have been
established in the pioneering paper \citet{Kobyl:00}. Alternative
proofs have been proposed in~\citet{Tevz:08} and~\citet{BrianElie:13}.
Generalizations to the unbounded case have been obtained
in~\citet{BrianHu:06, BrianHu:08} among others.

The situation with systems of quadratic BSDEs is more
cumbersome. Unless there is a special structure (as, for instance,
in~\citet{darling:95},~\citet{tang:03},~\citet{ElKHamad:03},~\citet{CherNam:15},
and~\citet{HuTang:16}), they may fail to have a solution even with
bounded terminal conditions. A counterexample can be found
in~\citet{FreiReis:11}.  On a positive side, existence and local
uniqueness of solutions have been obtained for sufficiently small
terminal conditions, first, in~\citet{Tevz:08} for the
$\Lspace{\infty}$-norm and then in~\citet{Frei:14}
and~\citet{KramkovPulido:16} for the bounded mean oscillation ($\bmo$)
norm.

In this paper, we establish stability properties and derive analytic
expansions of such local solutions.  Our main results are stated in
Theorems~\ref{th:1} and~\ref{th:2}. In Theorem~\ref{th:1}, we get
stability estimates in $\Sspace{p}$ and $\bmo$ spaces with respect to
the driver and the terminal condition. In Theorem~\ref{th:2} we obtain
analytic expansions in $\bmo$ with respect to the terminal
condition. The coefficients of these power series can be calculated
recursively up to an arbitrary order.

This work is motivated by our study in \cite{KramkovPulido:16} of a
price impact model from the market microstructure theory, which builds
upon earlier works by~\citet{GrosMill:88},~\citet*{GarlPederPotes:09},
and~\citet{Germ:11}. In this model, a representative dealer provides
liquidity for risky stocks and quotes prices in such a way that it is
optimal to meet an exogenous demand for stocks. As we have proved
in~\cite{KramkovPulido:16}, the resulting stock prices can be
characterized in terms of solutions to a system of quadratic BSDEs
parametrized by the demand process.

It has been shown in~\cite{Germ:11} that under simple demands the
stock prices exist, are unique and can be constructed explicitly by
backward induction and martingale representation. For general
(nonsimple) demands the situation is more involved. In this case, the
existence and uniqueness of prices can be obtained only if the product
of certain model parameters is sufficiently small as in~\eqref{eq:24}
below. A natural question to ask is whether under such constraint the
output stock prices are stable under demands and, in particular,
whether they can be well approximated by the prices originated from
simple demands. A positive answer is given in Theorem~\ref{th:3} and
relies on the general stability estimates from Theorem~\ref{th:1}.

As the dealer's risk aversion coefficient $a$ approaches zero, the
price impact effect vanishes and we arrive at a classical model of
Mathematical Finance.  In Theorem~\ref{th:4} we derive an analytic
expansion of prices for sufficiently small values of $a$, thus getting
a natural scale of price impact corrections. The leading term of these
corrections has been obtained in \cite{Germ:11} for simple demands.

\subsection*{Notations}
\label{sec:notations-1}
For a matrix $A = (A^{ij})$ we denote its transpose by $A^*$ and
define its norm as
\begin{displaymath}
  \abs{A} \set  \sqrt{\trace{AA^*}} = \sqrt{\sum_{i,j}\abs{A^{ij}}^2}.
\end{displaymath}

We will work on a filtered probability space $(\Omega, \mathcal{F},
(\mathcal{F}_t)_{t\in [0,T]}, \mathbb{P})$ satisfying the standard
conditions of right-continuity and completeness; the initial
$\sigma$-algebra $\mathcal{F}_0$ is trivial,
$\mathcal{F}=\mathcal{F}_T$, and the maturity $T$ is finite. The
expectation is denoted as $\mathbb{E}[\cdot]$ and the conditional
expectation with respect to $\mathcal{F}_t$ as $\mathbb{E}_t[\cdot]$.

For an $n$-dimensional integrable random variable $\xi$ set
\begin{align*}
  \Lnorm{p}{\xi} \set& (\mathbb{E}[\abs{\xi}^p])^{1/p}, \quad p\geq 1, \\
  \Lnorm{\infty}{\xi} \set& \inf\descr{c\geq 0}{\abs{\xi(\omega)} \leq
    c, \quad \mathbb{P}[d\omega]-a.s.}.
\end{align*}

We shall use the following spaces of stochastic processes:
\begin{description}
\item[$\bmo(\mathbf{R}^n)$] is the Banach space of continuous
  $n$-dimensional martingales $M$ with $M_0=0$ and the norm
  \begin{displaymath}
    \snorm{\bmo}{M} \set  \sup_{\tau}
    \Lnorm{\infty}{\braces{\mathbb{E}_\tau[\qvar{M}_T-
        \qvar{M}_\tau]}^{1/2}},  
  \end{displaymath}
  where the supremum is taken with respect to all stopping times
  $\tau$ and $\qvar{M}$ is the quadratic variation of $M$.
\item[$\Sspace{\bmo}(\mathbf{R}^n)$] is the Banach space of continuous
  $n$-dimensional semimartingales $X=X_0+M+A$, where $M$ is a
  continuous martingale and $A$ is a process of finite variation, with
  the norm
  \begin{displaymath}
    \Sbmonorm{X}\set|X_0|+\snorm{\bmo}{M}
    +\sup_{\tau}\Lnorm{\infty}{\mathbb{E}_{\tau}[\int_\tau^T \abs{dA}]}.
  \end{displaymath}
  Here the supremum is taken over all stopping times $\tau$ and $\int
  \abs{dA}$ is the total variation of $A$.
\item[$\Sspace{p}(\mathbf{R}^n)$] for $p\geq 1$ is the Banach space of
  continuous $n$-dimensional semimartingales $X=X_0+M+A$, where $M$ is
  a continuous martingale and $A$ is a process of finite variation,
  with the norm
  \begin{displaymath}
    \Snorm{p}{X}\set|X_0|+ \Lnorm{p}{\qvar{M}^{1/2}_T} + \Lnorm{p}{\int_0^T
      \abs{dA}}.
  \end{displaymath}
\item[$\Hspace{0}(\mathbf{R}^{n\times d})$] is the vector space of
  predictable processes $\zeta$ with values in $n\times d$-matrices
  such that $\int_0^T \abs{\zeta_s}^2 ds < \infty$. This is precisely
  the space of $n\times d$-dimensional integrands $\zeta$ for a
  $d$-dimensional Brownian motion $B$. We shall identify $\alpha$ and
  $\beta$ in $\Hspace{0}(\mathbf{R}^{n\times d})$ if $\int_0^T
  \abs{\alpha_s- \beta_s}^2 ds = 0$ or, equivalently, if the
  stochastic integrals $\alpha\cdot B$ and $\beta\cdot B$ coincide.
\item[$\Hspace{p}(\mathbf{R}^{n\times d})$] for $p\geq 1$ consists of
  $\zeta \in \Hspace{0}(\mathbf{R}^{n\times d})$ such that $\zeta
  \cdot B \in \Sspace{p}(\mathbf{R}^n)$ for a $d$-dimensional Brownian
  motion $B$. It is a Banach space under the norm:
  \begin{displaymath}
    \Hnorm{p}{\zeta} \set \Snorm{p}{\zeta \cdot B}=
    \braces{\mathbb{E}\left[\left(\int_0^T
          \abs{\zeta_s}^2ds\right)^{p/2}\right]}^{1/p}.
  \end{displaymath}
\item[$\Hbmo(\mathbf{R}^{n\times d})$] consists of $\zeta \in
  \Hspace{0}(\mathbf{R}^{n\times d})$ such that $\zeta \cdot B \in
  \bmo(\mathbf{R}^n)$ for a $d$-dimensional Brownian motion $B$. It is
  a Banach space under the norm:
  \begin{displaymath}
    \Hnorm{\bmo}{\zeta} \set \snorm{\bmo}{\zeta \cdot B} = \sup_{\tau}
    \Lnorm{\infty}{\braces{\mathbb{E}_\tau\left[\int_\tau^T
          \abs{\zeta_s}^2ds\right]}^{1/2}}.
  \end{displaymath}
\item[$\Hspace{\infty}(\mathbf{R}^{n})$] is the Banach space of
  bounded $n$-dimensional predictable processes $\gamma$ with the
  norm:
  \begin{displaymath}
    \Hnorm{\infty}{\gamma} \set \inf\descr{c\geq
      0}{\abs{\gamma_t(\omega)} \leq c, \quad dt\times
      \mathbb{P}[d\omega]-a.s.}.
  \end{displaymath}
\end{description}

For an $n$-dimensional integrable random variable $\xi$ with
$\mathbb{E}[\xi] = 0$ set
\begin{displaymath}
  \Lnorm{\bmo}{\xi} \set \snorm{\bmo}{(\mathbb{E}_t[\xi])_{t\in
      [0,T]}}. 
\end{displaymath}

\section{Stability estimates}
\label{sec:stability_estimates}

Hereafter, we shall assume that
\begin{enumerate}[label=(A\arabic{*}), ref=(A\arabic{*})]
\item \label{item:1} There exists a $d$-dimensional Brownian motion
  $B$ such that every local martingale $M$ is a stochastic integral
  with respect to $B$:
  \begin{displaymath}
    M =  M_0 + \zeta \cdot B.
  \end{displaymath}
  \setcounter{item}{\value{enumi}}
\end{enumerate}
Of course, this assumption holds if the filtration is generated by
$B$.

Consider the $n$-dimensional BSDE:
\begin{equation}
  \label{eq:1}
  Y_t=\Xi + \int_t^Tf(s,\zeta_s)\,ds-\int_t^T\zeta\,dB,\quad t\in
  [0,T].
\end{equation}
Here $Y$ is an $n$-dimensional semimartingale, $\zeta$ is a
predictable process with values in the space of $n\times d$ matrices,
and the terminal condition $\Xi$ and the driver $f=f(t,z)$ satisfy the
following assumptions:
\begin{enumerate}[label=(A\arabic{*}), ref=(A\arabic{*})]
  \setcounter{enumi}{\value{item}}
\item \label{item:2} $\Xi$ is an integrable random variable with
  values in $\mathbf{R}^n$ such that the martingale
  \begin{displaymath}
    L_t \set \mathbb{E}_t[\Xi] - \mathbb{E}[\Xi], \quad t\in [0,T],
  \end{displaymath}
  belongs to $\bmo$.
\item \label{item:3} $t\mapsto f(t,z)$ is a predictable process with
  values in $\mathbf{R}^n$,
  \begin{equation}
    \label{eq:32}
    f(t,0)=0,
  \end{equation}
  and there is a constant $\Theta>0$ such that
  \begin{equation}
    \label{eq:31}
    |f(t,u)-f(t,v)| \leq
    \Theta(|u-v|)(|u|+|v|),
  \end{equation}
  for all $t\in [0,T]$ and $u,v\in \mathbf{R}^{n\times d}$.
  \setcounter{item}{\value{enumi}}
\end{enumerate}
Note that $f=f(t,z)$ has a quadratic growth in $z$. We shall
discuss~\ref{item:3} in Remark~\ref{rem:3} at the end of this section.

Recall that there is a constant $\kappa = \kappa(n)$ such that, for
every martingale $M\in \bmo(\mathbf{R}^n)$,
\begin{equation}
  \label{eq:2}
  \frac1{\kappa} \snorm{\bmo}{M} \leq \snorm{\bmo_1}{M} \set
  \sup_{\tau}\Lnorm{\infty}{\mathbb{E}_\tau[\abs{M_T-M_\tau}]} \leq
  \snorm{\bmo}{M}, 
\end{equation}
see~\cite{Kazam:94}, Corollary~2.1.  Theorem A.1
in~\cite{KramkovPulido:16} shows that under \ref{item:1},
\ref{item:2}, and \ref{item:3}, if
\begin{equation}
  \label{eq:33}
  \snorm{\bmo}{L} < \frac1{8\kappa\Theta},
\end{equation}
then there exists a unique solution $(Y,\zeta)$ to~\eqref{eq:1} such
that
\begin{displaymath}
  \Hnorm{\bmo}{\zeta} \leq  \frac1{4\kappa\Theta}. 
\end{displaymath}
The analogous local existence and uniqueness result was first shown
in~\cite[Proposition 1]{Tevz:08} for small bounded terminal conditions
and then in~\cite[Proposition 2.1]{Frei:14} in the current $\bmo$
setting, but with different constants.

Theorem~\ref{th:1} below provides stability estimates for such
\emph{local} solution $(Y,\zeta)$ with respect to the terminal
condition and the driver. Along with~\eqref{eq:1}, we consider a
similar $n$-dimensional BSDE
\begin{align}
  \label{eq:3}
  Y'_t =\Xi' + \int_t^Tf'(s,\zeta'_s)\,ds-\int_t^T\zeta'\,dB,\quad
  t\in [0,T],
\end{align}
whose terminal condition $\Xi'$ and the driver $f' = f'(t,z)$ satisfy
the same conditions~\ref{item:2} and \ref{item:3} as $\Xi$ and $f$. We
denote
\begin{displaymath}
  L'_t \set \mathbb{E}_t[\Xi'] - \mathbb{E}[\Xi'], \quad t\in [0,T],
\end{displaymath}
and assume that there exists a nonnegative process $\delta=(\delta_t)$
such that
\begin{equation}
  \label{eq:4}
  \abs{f(t,z)-f'(t,z)} \leq \delta_t\abs{z}^2.
\end{equation}

\begin{Theorem}
  \label{th:1}
  Assume that the BSDEs~\eqref{eq:1} and~\eqref{eq:3}
  satisfy~\ref{item:1}, \ref{item:2}, \ref{item:3}, and~\eqref{eq:4}
  and let $(Y,\zeta)$ and $(Y',\zeta')$ be their respective
  solutions. For $p>1$ there are positive constants $c=c(n,p)$ and
  $C=C(n,p)$ (depending only on $n$ and $p$) such that if
  \begin{equation}
    \label{eq:5}
    \Hnorm{\bmo}{\zeta} + \Hnorm{\bmo}{\zeta'}  \leq \frac{c}{\Theta},
  \end{equation}
  then
  \begin{align}
    \label{eq:6}
    \Hnorm{p}{\zeta'- \zeta} &\leq C \left(\Lnorm{p}{L'_T-L_T} +
      \Hnorm{2p}{\sqrt{\delta}\zeta}^2\right), \\
    \label{eq:7}
    \Snorm{p}{Y'- Y} &\leq C \left(\Lnorm{p}{\Xi'-\Xi} +
      \Hnorm{2p}{\sqrt{\delta}\zeta}^2\right).
  \end{align}
  Moreover, there exist constants $\widetilde{c}=\widetilde{c}(n)$ and
  $\widetilde{C}=\widetilde{C}(n)$ (depending only on $n$) such that
  if
  \begin{displaymath}
    \Hnorm{\bmo}{\zeta} + \Hnorm{\bmo}{\zeta'}  \leq \frac{\widetilde{c}}{\Theta},
  \end{displaymath}
  then
  \begin{align}
    \label{eq:6-1}
    \Hnorm{\bmo}{\zeta'- \zeta} &\leq \widetilde{C}
    \left(\snorm{\bmo}{L'-L} +
      \Hnorm{\bmo}{\sqrt{\delta}\zeta}^2\right), \\
    \label{eq:7-1}
    \Snorm{\bmo}{Y'- Y} &\leq \widetilde{C}
    \left(\abs{\mathbb{E}[\Xi'-\Xi]}+ \snorm{\bmo}{L'-L} +
      \Hnorm{\bmo}{\sqrt{\delta}\zeta}^2\right).
  \end{align}
\end{Theorem}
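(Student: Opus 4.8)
The plan is to subtract the two equations, linearize the driver difference, and treat the result as a \emph{linear} BSDE whose drift coefficient is small in $\bmo$. Write $\Delta Y\set Y'-Y$, $\Delta\zeta\set\zeta'-\zeta$, $\Delta\Xi\set\Xi'-\Xi$, and put $g_s\set f'(s,\zeta_s)-f(s,\zeta_s)$, so that~\eqref{eq:4} gives $\abs{g_s}\leq\delta_s\abs{\zeta_s}^2$. The local Lipschitz bound~\ref{item:3} applied to $f'$ lets me write $f'(s,\zeta'_s)-f'(s,\zeta_s)=\beta_s\,\Delta\zeta_s$ for a predictable process $\beta$ with values in the linear maps $\mathbf{R}^{n\times d}\to\mathbf{R}^n$ (constructed componentwise by projecting onto $\Delta\zeta_s$), satisfying $\abs{\beta_s}\leq\sqrt{n}\,\Theta(\abs{\zeta_s}+\abs{\zeta'_s})$. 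Consequently $(\Delta Y,\Delta\zeta)$ solves the linear BSDE
\begin{displaymath}
  \Delta Y_t=\Delta\Xi+\int_t^T\rbr{\beta_s\,\Delta\zeta_s+g_s}\,ds-\int_t^T\Delta\zeta\,dB .
\end{displaymath}

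Next I record the sizes of coefficient and source. Set $b\set\sup_\tau\Lnorm{\infty}{\rbr{\mathbb{E}_\tau[\int_\tau^T\abs{\beta_s}^2\,ds]}^{1/2}}$; the bound on $\abs\beta$ and hypothesis~\eqref{eq:5} give $b\leq C(n)\,\Theta\rbr{\Hnorm{\bmo}{\zeta}+\Hnorm{\bmo}{\zeta'}}\leq C(n)\,c$, so choosing $c$ (resp.\ $\widetilde c$) small makes $b$ as small as I like --- this is exactly what the smallness constraints buy. For the source, $\int_0^T\abs{g_s}\,ds\leq\int_0^T\delta_s\abs{\zeta_s}^2\,ds$ yields $\Lnorm{p}{\int_0^T\abs{g_s}\,ds}\leq\Hnorm{2p}{\sqrt{\delta}\zeta}^2$ and $\sup_\tau\Lnorm{\infty}{\mathbb{E}_\tau[\int_\tau^T\abs{g_s}\,ds]}\leq\Hnorm{\bmo}{\sqrt{\delta}\zeta}^2$. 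Finally, since $f,f'$ depend only on $\zeta$, centering $\Xi,\Xi'$ leaves $\zeta,\zeta'$ (hence $\Delta\zeta$) unchanged and shifts $\Delta Y$ by the constant $\mathbb{E}[\Delta\Xi]$; this is why the $\zeta$--estimates~\eqref{eq:6},~\eqref{eq:6-1} may be written through $L'_T-L_T=\Delta\Xi-\mathbb{E}[\Delta\Xi]$ and $L'-L$, while the $Y$--estimates~\eqref{eq:7},~\eqref{eq:7-1} keep the mean.

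The core is an a priori estimate for the linear BSDE above. Applying It\^o's formula to $\abs{\Delta Y}^2$ and taking $\mathbb{E}_\tau$ at a stopping time $\tau$ gives the energy identity
\begin{displaymath}
  \abs{\Delta Y_\tau}^2+\mathbb{E}_\tau\cbr{\int_\tau^T\abs{\Delta\zeta_s}^2\,ds}
  =\mathbb{E}_\tau\cbr{\abs{\Delta\Xi}^2}+2\,\mathbb{E}_\tau\cbr{\int_\tau^T\ip{\Delta Y_s}{\beta_s\Delta\zeta_s+g_s}\,ds}.
\end{displaymath}
In the cross term I apply Young's inequality $2\abs{\Delta Y_s}\abs{\beta_s}\abs{\Delta\zeta_s}\leq\tfrac12\abs{\Delta\zeta_s}^2+2\abs{\Delta Y_s}^2\abs{\beta_s}^2$ to absorb half of $\int\abs{\Delta\zeta}^2$ into the left side, and Cauchy--Schwarz to expose $b^2$. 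For~\eqref{eq:6-1},~\eqref{eq:7-1} this produces a coupled pair of inequalities for $\Hnorm{\bmo}{\Delta\zeta}$ and $\sup_t\abs{\Delta Y_t}$, with the terminal data entering via $\mathbb{E}_\tau[\abs{L'_T-L_T}^2]\leq\snorm{\bmo}{L'-L}^2$ and, for the $Y$--bound, via $\abs{\Delta Y_0}$ (which carries $\abs{\mathbb{E}[\Xi'-\Xi]}$); smallness of $\widetilde c$ closes the system. For~\eqref{eq:6},~\eqref{eq:7} I instead take $\sup_t$ in the unconditioned energy identity, bound the martingale term by Burkholder--Davis--Gundy, and control $\int_0^T\abs{\beta_s}^2\,ds$ in $\Lspace{q}$ for every $q$ through the energy inequality $\mathbb{E}[(\int_0^T\abs{\beta_s}^2\,ds)^m]\leq m!\,b^{2m}$ for the $\bmo$ martingale $\beta\cdot B$; a H\"older split then absorbs the $\Delta\zeta$-- and $\sup\abs{\Delta Y}$--contributions once $c$ is small, leaving the asserted $\Sspace{p}/\Hspace{p}$ bounds.

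The main obstacle is the coupling term $\beta_s\,\Delta\zeta_s$. In the scalar case it equals $\ip{\beta_s}{\Delta\zeta_s}$ and can be eliminated outright by a Girsanov change of measure with kernel $\beta$; here $\beta_s$ is a genuine linear map mixing all components of $\Delta Y$ and $\Delta\zeta$, so no single equivalent measure linearizes the system, and one must instead \emph{absorb} the coupling. This works only because~\eqref{eq:5} forces $b$ below the threshold at which the energy and John--Nirenberg estimates leave a fixed margin in each absorption step. The delicate part is therefore the constant bookkeeping: choosing $c,\widetilde c$ small enough that every absorption succeeds while verifying that the resulting $C,\widetilde C$ depend only on $n,p$ --- through $\kappa(n)$, the dimensional factor $\sqrt n$ in the bound on $\abs\beta$, and the $\Lspace{p}$ constants in Burkholder--Davis--Gundy and in the energy inequality --- and not on $\Theta$ or on the data.
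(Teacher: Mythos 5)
Your preparatory steps are sound and match the preparatory part of the paper's argument: the linearization $f'(s,\zeta'_s)-f'(s,\zeta_s)=\beta_s\,\Delta\zeta_s$ with $\abs{\beta_s}\leq\sqrt{n}\,\Theta(\abs{\zeta_s}+\abs{\zeta'_s})$, the bounds on the source $g$, and the centering remark are all correct. The gap is in your core a priori estimate. The energy method (It\^o on $\abs{\Delta Y}^2$, conditioning at $\tau$, absorption via Young) is Tevzadze's device for \emph{bounded} terminal conditions, and it breaks down under~\ref{item:2}, where $\Xi,\Xi'$ are merely integrable with $\bmo$ martingale parts. Concretely: (i) your inequality $\mathbb{E}_\tau[\abs{L'_T-L_T}^2]\leq\snorm{\bmo}{L'-L}^2$ is false --- the $\bmo$ norm controls only the conditional increment $\mathbb{E}_\tau[\abs{(L'-L)_T-(L'-L)_\tau}^2]$, while $\abs{(L'-L)_\tau}$ is unbounded, since a $\bmo$ martingale need not be bounded; (ii) for the same reason $\sup_t\abs{\Delta Y_t}\notin\Lspace{\infty}$, so your ``coupled pair of inequalities for $\Hnorm{\bmo}{\Delta\zeta}$ and $\sup_t\abs{\Delta Y_t}$'' cannot close: the term $\mathbb{E}_\tau[\int_\tau^T\abs{\Delta Y_s}^2\abs{\beta_s}^2\,ds]$ produced by Young admits no uniform bound, and the conditional identity also carries the uncontrolled quantities $\mathbb{E}_\tau[\abs{\Delta\Xi}^2]$ and $\abs{\Delta Y_\tau}^2$. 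It is precisely because $\Delta Y$ is unbounded here that the conclusion is stated in $\Sbmo$ rather than as a sup bound.

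The $\Lspace{p}$ branch has a second, subtler defect: after BDG and the energy inequality $\mathbb{E}[(\int_0^T\abs{\beta_s}^2ds)^m]\leq m!\,b^{2m}$, any H\"older split of the cross term $\int\abs{\Delta Y}\abs{\beta}\abs{\Delta\zeta}\,ds$ yields norms of $\sup_t\abs{\Delta Y_t}$ and of $\Delta\zeta$ at integrability strictly above $p$, which cannot be absorbed into the $\Hspace{p}/\Sspace{p}$ quantities on the left; ``a H\"older split then absorbs'' hides exactly the missing idea. The paper avoids It\^o on $\abs{\Delta Y}^2$ altogether: it works additively with $M=\Delta\zeta\cdot B$ and $A_t=\int_0^t(f'(s,\zeta'_s)-f(s,\zeta_s))\,ds$, exploits $M_T=\Delta L_T+A_T-\mathbb{E}[A_T]$ so that only increments of $L'-L$ ever enter, estimates the potential $Z_t=\mathbb{E}_t[\int_t^T\abs{dA}]$ and passes to $\Lspace{p}$ by the Garsia--Neveu inequality, and closes the absorption at the \emph{same} level $\Hnorm{p}{\Delta\zeta}$ by a conditional H\"older inequality with an intermediate exponent $p'\in(1,p)$, the equivalence of $\bmo_{q'}$ norms, and Doob at exponent $r=p/p'$. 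Your linearized formulation is not hopeless --- the paper's first Remark after the proof notes that~\eqref{eq:7} can be deduced from the stability theory for Lipschitz BSDEs with $\bmo$ coefficients in Theorem~2.2 of~\cite{DelbTang:08} --- but that result supplies precisely the unbounded-coefficient machinery your absorption step is standing in for; as written, both the $\bmo$ and the $\Sspace{p}$ halves of your core estimate fail.
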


\begin{proof}
  To shorten the notations set $\Delta \zeta \set \zeta' - \zeta$,
  etc.  Define the martingale $M$ and the process $A$ of bounded
  variation as
  \begin{align*}
    M  &\set \Delta \zeta \cdot B, \\
    A_t &\set \int_0^t (f'(s,\zeta'_s) - f(s,\zeta_s)) ds, \quad t\in
    [0,T].
  \end{align*}
  We deduce that
  \begin{displaymath}
    M_T = \Delta L_T + A_T - \mathbb{E}[A_T],
  \end{displaymath}
  which readily implies that
  \begin{displaymath}
    \Lnorm{p}{M_T} \leq \Lnorm{p}{\Delta L_T} +  2\Lnorm{p}{A_T}.
  \end{displaymath}
  From Doob's and Burkholder-Davis-Gundy's (BDG) inequalities we
  deduce the existence of a constant $C_1 = C_1(n,p)$ such that
  \begin{displaymath}
    \Hnorm{p}{\Delta \zeta} \leq C_1\Lnorm{p}{M_T}.
  \end{displaymath}

  To estimate $\Lnorm{p}{A_T}$ we use the potential $Z$ associated
  with the variation of $A$:
  \begin{align*}
    Z_t &\set \mathbb{E}_t\left[\int_t^T \abs{dA}\right] =
    \mathbb{E}_t\left[\int_t^T \abs{f'(s,\zeta'_s) -
        f(s,\zeta_s)}ds\right] \\
    & \leq \mathbb{E}_t\left[\int_t^T \abs{f'(s,\zeta'_s) -
        f'(s,\zeta_s)}ds\right] +
    \mathbb{E}_t\left[\int_t^T\abs{f'(s,\zeta_s) - f(s,\zeta_s)}
      ds\right].
  \end{align*}
  Denoting $Z^* \set \sup_{t\in [0,T]}\abs{Z_t}$ we have, by the
  Garsia-Neveu inequality,
  \begin{displaymath}
    \Lnorm{p}{A_T} \leq \Lnorm{p}{\int_0^T \abs{dA}} \leq p
    \Lnorm{p}{Z^*}. 
  \end{displaymath}

  Take $1<p'<p$ and denote $q' \set p'/(p'-1)$. From~\ref{item:3} and
  the Cauchy-Schwarz and H\"older inequalities we obtain
  \begin{align*}
    U_t & \set \mathbb{E}_t\left[\int_t^T \abs{f'(s,\zeta'_s) -
        f'(s,\zeta_s)} ds\right] \leq \Theta \mathbb{E}_t\left[
      \int_t^T (\abs{\zeta_s} + \abs{\zeta'_s})\abs{\Delta \zeta_s}
      ds\right] \\
    & \leq \Theta\mathbb{E}_t\left[\left(\int_t^T(\abs{\zeta_s} +
        \abs{\zeta'_s})^2 ds\right)^{1/2} \left(\int_t^T\abs{\Delta
          \zeta_s}^2
        ds \right)^{1/2}\right] \\
    &\leq \Theta \left(\mathbb{E}_t\left[\left(\int_t^T(\abs{\zeta_s}
          + \abs{\zeta'_s})^2 ds\right)^{q'/2}\right]\right)^{1/q'}
    \left(\mathbb{E}_t\left[\left(\int_t^T\abs{\Delta \zeta_s}^2
          ds\right)^{p'/2}\right]\right)^{1/p'}.
  \end{align*}
  From Doob's inequality, the BDG inequalities, and the equivalence of
  $\bmo_p$-norms, see~\cite[Corollary~2.1, p.~28]{Kazam:94}, we deduce
  the existence of a constant $C_2 = C_2(n,q')$ such that
  \begin{displaymath}
    \left(\mathbb{E}_t\left[\left(\int_t^T(\abs{\zeta_s}
          + \abs{\zeta'_s})^2 ds\right)^{q'/2}\right]\right)^{1/q'} \leq
    C_2 (\Hnorm{\bmo}{\zeta} + \Hnorm{\bmo}{\zeta'}).
  \end{displaymath}
  Using the obvious estimate
  \begin{displaymath}
    \mathbb{E}_t\left[\left(\int_t^T\abs{\Delta \zeta_s}^2
        ds\right)^{p'/2}\right] \leq
    \mathbb{E}_t\left[\left(\int_0^T\abs{\Delta \zeta_s}^2
        ds\right)^{p'/2}\right] \set N_t,
  \end{displaymath}
  and denoting $r\set p/p'>1$, we deduce from Doob's inequality the
  existence of a constant $C_3 = C_3(r)$ such that
  \begin{align*}
    \Lnorm{r}{N^*} \leq C_3 \Lnorm{r}{N_T} = C_3 \Hnorm{p}{\Delta
      \zeta}^{p'}.
  \end{align*}
  Combining the above estimates we obtain
  \begin{displaymath}
    \Lnorm{p}{U^*} \leq C_4 \Theta (\Hnorm{\bmo}{\zeta} + \Hnorm{\bmo}{\zeta'})
    \Hnorm{p}{\Delta \zeta},
  \end{displaymath}
  where the constant $C_4 = C_2 C_3^{{1}/{p'}}$ depends only on $n$
  and $p$.

  From~\eqref{eq:4} we deduce
  \begin{align*}
    V_t &\set \mathbb{E}_t\left[ \int_t^T \abs{f'(s,\zeta_s) -
        f(s,\zeta_s)} ds\right] \leq \mathbb{E}_t\left[ \int_t^T
      \delta_s
      \abs{\zeta_s}^2 ds\right] \\
    &\leq \mathbb{E}_t\left[\int_0^T \delta_s \abs{\zeta_s}^2
      ds\right].
  \end{align*}
  Another application of Doob's inequality yields a constant $C_5 =
  C_5(p)>1$ such that
  \begin{displaymath}
    \Lnorm{p}{V^*} \leq  C_5 \Lnorm{p}{\int_0^T \delta_s \abs{\zeta_s}^2
      ds}. 
  \end{displaymath}

  Defining the constants $c=c(n,p)$ and $C=C(n,p)$ as
  \begin{align*}
    c = \frac1{4pC_1 C_4}, \quad C = 4pC_1C_5,
  \end{align*}
  and assuming~\eqref{eq:5}, we obtain
  \begin{align*}
    \Hnorm{p}{\Delta \zeta} &\leq C_1p (\Lnorm{p}{\Delta L_T} + 2\Lnorm{p}{Z^*}) \\
    &\leq C_1p
    (\Lnorm{p}{\Delta L_T} + 2\Lnorm{p}{U^*} + 2\Lnorm{p}{V^*}) \\
    &\leq C_1p \Bigl(\Lnorm{p}{\Delta L_T} + 2C_4
    \Theta(\Hnorm{\bmo}{\zeta} + \Hnorm{\bmo}{\zeta'})
    \Hnorm{p}{\Delta \zeta} \\
    &\qquad + 2C_5 \Lnorm{p}{\int_0^T \delta_s
      \abs{\zeta_s}^2 ds}\Bigr) \\
    &\leq \frac{1}2 \Hnorm{p}{\Delta \zeta} + \frac{C}2
    \Bigl(\Lnorm{p}{\Delta L_T} +
    \Hnorm{2p}{\sqrt{\delta}\zeta}^2\Bigr),
  \end{align*}
  which implies~\eqref{eq:6}.

  The estimate~\eqref{eq:7} follows from~\eqref{eq:6} and estimates
  above for $\Lnorm{p}{\int_0^T \abs{dA}}$ with appropriate $C=C(n,p)$
  as soon as we write
  \begin{align*}
    \Delta Y_0 &= \mathbb{E}[\Delta \Xi + A_T], \\
    \Delta Y &= \Delta Y_0 + M - A,
  \end{align*}
  with $M$ and $A$ defined at the beginning of the proof.

  Estimates~\eqref{eq:6-1} and~\eqref{eq:7-1} are more
  straightforward. Using the same processes $M$, $A$, $U$, and $V$ we
  have that
  \begin{displaymath}
    M_t=\Delta L_t+\E_t[A_T]-\E[A_T]
  \end{displaymath}
  and for a stopping time $\tau$
  \begin{displaymath}
    \E_{\tau}[\abs{A_T-\E_{\tau}[A_T]}] \leq 2 \mathbb{E}_\tau\left[\int_\tau^T \abs{f'(s,\zeta'_s) -
        f(s,\zeta_s)} ds\right] \leq 2(U_\tau + V_{\tau}). 
  \end{displaymath}
  By the Cauchy-Schwarz inequality we deduce from~\ref{item:3} that
  \begin{align*}
    U_\tau &= \mathbb{E}_\tau\left[\int_\tau^T \abs{f'(s,\zeta'_s) -
        f'(s,\zeta_s)} ds\right] \leq \Theta \mathbb{E}_\tau\left[
      \int_\tau^T (\abs{\zeta_s} + \abs{\zeta'_s})\abs{\Delta \zeta_s}
      ds\right] \\ &\leq \Theta(\Hnorm{\bmo}{\zeta} +
    \Hnorm{\bmo}{\zeta'}) \Hnorm{\bmo}{\Delta\zeta}
  \end{align*}
  and from~\eqref{eq:4} that
  \begin{align*}
    V_\tau &= \mathbb{E}_\tau\left[\int_\tau^T \abs{f'(s,\zeta_s) -
        f(s,\zeta_s)} ds\right] \leq \mathbb{E}_\tau\left[ \int_\tau^T
      \delta_s
      \abs{\zeta_s}^2 ds\right] \\
    &\leq \Hnorm{\bmo}{\sqrt{\delta}\zeta}^2.
  \end{align*}
  Accounting for~\eqref{eq:2} we obtain that
  \begin{align*}
    \frac1{2\kappa}\Lnorm{\bmo}{A_T-\E[A_T]} \leq
    \Theta(\Hnorm{\bmo}{\zeta} + \Hnorm{\bmo}{\zeta'} )
    \Hnorm{\bmo}{\Delta\zeta} + \Hnorm{\bmo}{\sqrt{\delta}\zeta}^2
  \end{align*}
  and since
  \begin{align*}
    \Hnorm{\bmo}{\Delta\zeta}=\snorm{\bmo}{M}\leq \snorm{\bmo}{\Delta
      L} + \Lnorm{\bmo}{A_T-\E[A_T]},
  \end{align*}
  estimate~\eqref{eq:6-1} readily follows if we choose
  \begin{align*}
    \widetilde{c} = \frac1{4\kappa}, \quad \widetilde{C} = 4\kappa.
  \end{align*}

  Finally, we deduce~\eqref{eq:7-1} (with appropriate
  $\widetilde{C}=\widetilde{C}(n)$) from the estimates for
  $\snorm{\bmo}{M}$ and $\Lnorm{\bmo}{A_T-\E[A_T]}$ as soon as we
  write
  \begin{displaymath}
    \Delta Y = \E[\Delta \Xi] + M - (A-\E[A_T]).
  \end{displaymath}
\end{proof}

\begin{Remark}
  Assume that BSDEs~\eqref{eq:1} and~\eqref{eq:3}
  satisfy~\ref{item:1}, \ref{item:2}, \ref{item:3}, and~\eqref{eq:4}
  and let $(Y,\zeta)$ and $(Y',\zeta')$ be their respective
  solutions. If we set
  \begin{displaymath}
    \begin{split}
      A_t&\set\int_0^t (f'(s,\zeta_s)-f(s,\zeta_s)) ds,\\
      \gamma_t&\set \frac{1}{\abs{\zeta'_t-\zeta_t}}(f'(s,\zeta'_s) -
      f'(s,\zeta_s)){\rm 1}_{\{\abs{\zeta'_t-\zeta_t}>0\}},
    \end{split}
  \end{displaymath}
  then we can write
  \begin{displaymath}
    Y'_t-Y_t=\Xi'-\Xi+A_T-A_t+\int_t^T\gamma_s\abs{\zeta'_s-\zeta_s}
    ds- \int_t^T(\zeta'-\zeta)\,dB. 
  \end{displaymath}
  This linearization argument allows us to deduce~\eqref{eq:7} from
  the stability estimate for Lipschitz BSDEs given in Theorem 2.2
  of~\cite{DelbTang:08} if we observe that the local
  bound~\eqref{eq:5} implies the conditions of item~(ii) of this
  theorem. We thank a referee for pointing out this connection.
\end{Remark}

\begin{Remark}
  \label{rem:3}
  Assumption~\ref{item:3} on the driver $f=f(t,z;\omega)$ allows us to
  obtain existence of solutions to~\eqref{eq:1} under the
  condition~\eqref{eq:33} which does \emph{not} involve the maturity
  $T$. Such $T$-independence property is not valid
  without~\eqref{eq:32} (this fact is easy to see) or if~\eqref{eq:31}
  is replaced with a weaker condition:
  \begin{equation}
    \label{eq:36}
    |f(t,u)-f(t,v)| \leq \Theta(|u-v|)(1+|u|+|v|).
  \end{equation}
  We construct below a related counterexample using the same idea as
  in \cite{FreiReis:11}.

  We take a one-dimensional Brownian motion $B$ and define the
  stopping time $\tau$ and the bounded martingale $M$ as
  \begin{align*}
    \tau &\set \min\descr{t\in (0,1)}{\abs{\int_0^t \frac1{1-s}dB_s} =
      \frac{\pi}2}, \\
    M_t &\set \int_0^{t\wedge \tau} \frac1{1-s}dB_s, \quad t\geq 0.
  \end{align*}
  We deduce that $\tau \in (0,1)$ and $\abs{M_\tau} = \pi/2$.
  Standard arguments (see \cite[Lemma~1.3]{Kazam:94}) show that
  \begin{equation}
    \label{eq:34}
    \mathbb{E}\left[\exp\left(\frac{a^2}2 \qvar{M}_\tau\right)\right] = \left\{
      \begin{array}[c]{cc}
        1/{\cos(a\pi/2)}, &\quad 0\leq a< 1, \\
        \infty, &\quad a\geq 1. 
      \end{array}
    \right.
  \end{equation}

  We consider the three-dimensional quadratic BSDE, parametrized by
  $a>0$ and $T>1$:
  \begin{align*}
    Y^1_t & = \frac{a}{\sqrt{T}} B_T - \int_t^T \zeta^1 dB, \\
    Y^2_t & = \int_t^T M_\tau \ind{s\geq 1} \zeta^1_s ds - \int_t^T
    \zeta^2 dB, \\
    Y^3_t & = \frac12 \int_t^T \left((\zeta^2_s)^2 +
      (\zeta^3_s)^2\right) ds - \int_t^T \zeta^3 dB.
  \end{align*}
  We notice that the driver in the second equation depends on
  $\zeta^1$ linearly, that~\eqref{eq:36} holds with
  $\Theta=\frac{\pi}2$ and that
  \begin{equation}
    \label{eq:35}
    \Lnorm{\bmo}{\Xi} =
    \Lnorm{\bmo}{\bigl(\frac{a}{\sqrt{T}}B_T,0,0\bigr)} = a.  
  \end{equation}

  If $\int \zeta^1 dB$ and $\int \zeta^2 dB$ are true martingales on
  $[0,T]$, then the first two equations yield that
  \begin{align*}
    \zeta^1 &= \frac{a}{\sqrt{T}}, \\
    \int_0^T \zeta^2 dB &= \frac{a(T-1)}{\sqrt{T}} M_\tau.
  \end{align*}
  The third equation implies that
  \begin{align*}
    \exp\left(Y^3_0 + \int_0^T \zeta^3 dB - \frac12 \int_0^T
      (\zeta^3_s)^2
      ds\right) &= \exp\left(\frac12 \int_0^T (\zeta^2_s)^2 ds\right) \\
    &= \exp\left(\frac{a^2(T-1)^2}{2T} \qvar{M}_\tau \right).
  \end{align*}
  In view of~\eqref{eq:34} and~\eqref{eq:35}, the existence of
  $\zeta^3$ is then equivalent to
  \begin{align*}
    \frac{a(T-1)}{\sqrt{T}} = \frac{\Lnorm{\bmo}{\Xi}(T-1)}{\sqrt{T}}
    < 1.
  \end{align*}
  Thus, the solvability of our BSDE depends on \emph{both}
  $\Lnorm{\bmo}{\Xi}$ and $T$.
\end{Remark}

\section{Analytic expansion for purely quadratic BSDE}
\label{sec:asympt-analys-quadr}

Consider an $n$-dimensional BSDE
\begin{equation}
  \label{eq:8}
  Y_t=a\Xi + \int_t^Tf(s,\zeta_s)\,ds-\int_t^T\zeta\,dB,\quad t\in
  [0,T],
\end{equation}
where the terminal condition depends on a parameter
$a\in\mathbf{R}$. If $\Xi$ and $f$ satisfy~\ref{item:2}
and~\ref{item:3} and $\abs{a}<\rho$, where
\begin{equation}
  \label{eq:9}
  \rho\set  \frac1{8\kappa\Theta\snorm{\bmo}{L}},
\end{equation}
then, by~\cite[Theorem A.1]{KramkovPulido:16}, there is only one
solution $(Y(a),\zeta(a))$ such that
\begin{equation}
  \label{eq:10}
  \Hnorm{\bmo}{\zeta(a)} \leq  \frac1{4\kappa\Theta},
\end{equation}
and for this solution we have an estimate:
\begin{displaymath}
  \Hnorm{\bmo}{\zeta(a)} \leq 2|a| \snorm{\bmo}{L}.
\end{displaymath}
In particular, $\zeta(a)$ converges to 0 in $\Hbmo$ as $a$ approaches
0.

In Theorem~\ref{th:2} below we obtain an analytic expansion for
$\zeta(a)$ in the neighborhood of $a=0$ under the additional
assumption that the driver $f=f(t,z)$ is purely quadratic in $z$:
\begin{displaymath}
  f(t,z) = \sum_{ijkl}z^{ij}z^{kl} \alpha^{ijkl}_t
\end{displaymath}
for some $\mathbf{R}^n$-valued predictable bounded processes
$(\alpha^{ijkl})$ or, equivalently,
\begin{enumerate}[label=(A\arabic{*}), ref=(A\arabic{*})]
  \setcounter{enumi}{\value{item}}
\item \label{item:4} $f(t,z)=\widetilde{f}(t,z,z)$, where for all $u,v
  \in \mathbf{R}^{n\times d}$ the map $t\mapsto \widetilde{f}(t,u,v)$
  is an $\mathbf{R}^n$-valued predictable process and for every $t\in
  [0,T]$ the map $(u,v)\mapsto \widetilde f(t,u,v)$ is symmetric,
  bilinear on $\mathbf{R}^{n\times d}\times \mathbf{R}^{n\times d}$,
  and bounded by a constant $\Theta>0$. In other words,
  \begin{align*}
    \widetilde{f}(t,\lambda u, v+w) &= \widetilde{f}(t,\lambda
    (v+w),u) = \lambda (\widetilde{f}(t,u,v) + \widetilde{f}(t,u,w)),
    \\ \abs{\widetilde{f}(t,u,v)} &\leq \Theta \abs{u} \abs{v},
  \end{align*}
  for all $t\in [0,T]$, $\lambda \in \mathbf{R}$, and $u,v,w\in
  \mathbf{R}^{n\times d}$.  \setcounter{item}{\value{enumi}}
\end{enumerate}

Notice that~\ref{item:4} implies~\ref{item:3}:
\begin{align*}
  |f(t,u)-f(t,v)| &= \abs{\widetilde{f}(t,u,u) - \widetilde{f}(t,v,v)}
  = \abs{\widetilde{f}(t,u-v,u+v)} \\
  &\leq \Theta(|u-v|)(|u+v|) \leq \Theta(|u-v|)(|u|+|v|).
\end{align*}
Condition~\ref{item:4} naturally arises in financial applications
dealing with exponential utilities, see e.g., \cite{FreiReis:11},
\cite{KramkovPulido:16}, and \cite{KardXingZitk:15}.

To state Theorem~\ref{th:2} we need the following technical result.

\begin{Lemma}
  \label{lem:1}
  Assume \ref{item:1} and~\ref{item:4}. For $\mu,\nu\in\Hbmo$ there is
  a unique $\zeta\in\Hbmo$ such that
  \begin{equation}
    \label{eq:11}
    (\zeta\cdot B)_t = \mathbb{E}_t\left[
      \int_0^T\widetilde{f}(s,\mu_s,\nu_s)\,ds\right]
    -\mathbb{E}\left[\int_0^T\widetilde{f}(s,\mu_s,\nu_s)\,ds
    \right].
  \end{equation}
  Moreover,
  \begin{displaymath}
    \Hnorm{\bmo}{\zeta} \leq 2\kappa\Theta\Hnorm{\bmo}{\mu}\Hnorm{\bmo}{\nu},
  \end{displaymath}
  where the positive constants $\kappa$ and $\Theta$ are defined
  in~\eqref{eq:2} and~\ref{item:4}.
\end{Lemma}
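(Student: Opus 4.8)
The plan is to produce $\zeta$ by martingale representation and then verify the $\bmo$ bound, which is the only real content. First I would set $G\set\int_0^T\widetilde{f}(s,\mu_s,\nu_s)\,ds$ and check it is integrable: by the bound $\abs{\widetilde{f}(t,u,v)}\leq\Theta\abs{u}\abs{v}$ in~\ref{item:4} and the Cauchy--Schwarz inequality, $\E[\abs{G}]\leq\Theta\,\E[\int_0^T\abs{\mu_s}\abs{\nu_s}\,ds]\leq\Theta\,\Hnorm{\bmo}{\mu}\Hnorm{\bmo}{\nu}<\infty$, where the last step uses that taking $\tau=0$ in the $\Hbmo$-norm bounds $\E[\int_0^T\abs{\mu_s}^2\,ds]$ by $\Hnorm{\bmo}{\mu}^2$. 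Consequently $N_t\set\E_t[G]-\E[G]$ is a uniformly integrable martingale with $N_0=0$, and it is precisely the right-hand side of~\eqref{eq:11}. By~\ref{item:1} it admits a representation $N=\zeta\cdot B$ for a predictable $\zeta$, unique in the sense of the identification in $\Hspace{0}$, so existence and uniqueness of $\zeta$ are immediate; it remains to show $\zeta\in\Hbmo$ together with the stated bound, which amounts to estimating $\snorm{\bmo}{N}=\Hnorm{\bmo}{\zeta}$.

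For the estimate I would work with the equivalent $\bmo_1$-norm from~\eqref{eq:2}. Fix a stopping time $\tau$. Since $N_T-N_\tau=G-\E_\tau[G]$ and the $\cF_\tau$-measurable part $\int_0^\tau\widetilde{f}(s,\mu_s,\nu_s)\,ds$ cancels, one has $N_T-N_\tau=\int_\tau^T\widetilde{f}(s,\mu_s,\nu_s)\,ds-\E_\tau[\int_\tau^T\widetilde{f}(s,\mu_s,\nu_s)\,ds]$, whence by the triangle inequality (the second term being $\cF_\tau$-measurable)
\begin{displaymath}
  \E_\tau[\abs{N_T-N_\tau}]\leq 2\,\E_\tau\left[\int_\tau^T\abs{\widetilde{f}(s,\mu_s,\nu_s)}\,ds\right].
\end{displaymath}
Applying the bound in~\ref{item:4}, then the pathwise and conditional Cauchy--Schwarz inequalities, I would continue with
\begin{displaymath}
  \E_\tau[\abs{N_T-N_\tau}]\leq 2\Theta\left(\E_\tau\left[\int_\tau^T\abs{\mu_s}^2\,ds\right]\right)^{1/2}\left(\E_\tau\left[\int_\tau^T\abs{\nu_s}^2\,ds\right]\right)^{1/2}\leq 2\Theta\,\Hnorm{\bmo}{\mu}\Hnorm{\bmo}{\nu},
\end{displaymath}
the last inequality being just the definition of the $\Hbmo$-norms. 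Taking the supremum over $\tau$ gives $\snorm{\bmo_1}{N}\leq 2\Theta\,\Hnorm{\bmo}{\mu}\Hnorm{\bmo}{\nu}$, and~\eqref{eq:2} then yields $\Hnorm{\bmo}{\zeta}=\snorm{\bmo}{N}\leq\kappa\snorm{\bmo_1}{N}\leq 2\kappa\Theta\,\Hnorm{\bmo}{\mu}\Hnorm{\bmo}{\nu}$, as claimed.

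I do not expect a genuine obstacle here: the symmetry and bilinearity of $\widetilde{f}$ play no role, only the growth bound $\abs{\widetilde{f}(t,u,v)}\leq\Theta\abs{u}\abs{v}$ is used. The two points that need care are the preliminary integrability check that makes $N$ a bona fide martingale, and the bookkeeping of constants—the factor $2$ coming from the localization at $\tau$ (subtracting the conditional expectation) and the factor $\kappa$ from the passage between the $\bmo_1$- and $\bmo$-norms in~\eqref{eq:2}.
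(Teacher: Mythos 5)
Your proposal is correct and follows essentially the same route as the paper's proof: define the martingale $N_t=\E_t[G]-\E[G]$, bound $\E_\tau[\abs{N_T-N_\tau}]$ by $2\E_\tau[\int_\tau^T\abs{\widetilde f}\,ds]$, apply the growth bound from~\ref{item:4} with the conditional Cauchy--Schwarz inequality, and pass from the $\bmo_1$- to the $\bmo$-norm via~\eqref{eq:2} before invoking~\ref{item:1} for the representation $\zeta\cdot B$. The only difference is your explicit preliminary integrability check for $G$, which the paper leaves implicit; your closing remark that only the growth bound (not bilinearity or symmetry) is used is also accurate.
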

\begin{proof}
  Define the martingale
  \begin{displaymath}
    M_t \set  \mathbb{E}_t\left[
      \int_0^T\widetilde{f}(s,\mu_s,\nu_s)\,ds\right]
    -\mathbb{E}\left[\int_0^T\widetilde{f}(s,\mu_s,\nu_s)\,ds
    \right].
  \end{displaymath}
  For a stopping time $\tau$ we deduce from~\ref{item:4} and the
  Cauchy-Schwarz inequality that
  \begin{align*}
    \mathbb{E}_{\tau}[\abs{M_T - M_\tau}] & =
    \mathbb{E}_\tau\left[\abs{\int_\tau^T
        \widetilde{f}(s,\mu_s,\nu_s)ds - \mathbb{E}_\tau[\int_\tau^T
        \widetilde{f}(s,\mu_s,\nu_s)ds]}\right] \\
    & \leq
    2\mathbb{E}_{\tau}\left[\int_{\tau}^T\abs{\widetilde{f}(s,\mu_s,\nu_s)}
      ds\right] \leq
    2\Theta\mathbb{E}_{\tau}\left[\int_{\tau}^T\abs{\mu_s} \abs{\nu_s}ds\right] \\
    &\leq
    2\Theta\left(\mathbb{E}_{\tau}\left[\int_{\tau}^T\abs{\mu_s}^2
        ds\right]\right)^{1/2}
    \left(\mathbb{E}_{\tau}\left[\int_{\tau}^T\abs{\nu_s}^2
        ds\right]\right)^{1/2}.
  \end{align*}
  Hence by~\eqref{eq:2},
  \begin{displaymath}
    \snorm{\bmo}{M} \leq \kappa \sup_{\tau}
    \Lnorm{\infty}{\mathbb{E}_{\tau}[\abs{M_T - 
        M_\tau}] } \leq 2\kappa\Theta
    \Hnorm{\bmo}{\mu}\Hnorm{\bmo}{\nu},  
  \end{displaymath}
  and the result follows because, in view of~\ref{item:1}, $M$ admits
  an integral representation $\zeta\cdot B$ for some unique $\zeta\in
  \Hbmo$.
\end{proof}

Lemma~\ref{lem:1} allows us to define the map
\begin{displaymath}
  \map{\widetilde F}{\Hbmo\times \Hbmo}{\Hbmo}
\end{displaymath}
such that $\zeta= \widetilde{F}(\mu,\nu)$ is given
by~\eqref{eq:11}. This map is bilinear (since
$\widetilde{f}(t,\cdot,\cdot)$ is bilinear) and is bounded by
$2\kappa\Theta$.

Recall the constant $\rho$ from~\eqref{eq:9} and the $\bmo$ martingale
$L$ from~\ref{item:2}.

\begin{Theorem}
  \label{th:2}
  Assume ~\ref{item:1}, \ref{item:2}, and~\ref{item:4}. Then for
  $\abs{a}<\rho$ there is only one solution $(Y(a),\zeta(a))$
  to~\eqref{eq:8} satisfying~\eqref{eq:10}.  It is given by the power
  series
  \begin{displaymath}
    Y(a) =\sum_{k=1}^\infty Y^{(k)} a^k \quad\text{and} \quad
    \zeta(a) =\sum_{k=1}^\infty \zeta^{(k)} a^k
  \end{displaymath}
  convergent for $\abs{a}<\rho$ in $\Sbmo$ and $\Hbmo$, respectively,
  with the coefficients
  \begin{align}
    \label{eq:12}
    Y_t^{(1)} & = \mathbb{E}_t[\Xi], \quad t\in [0,T], \\
    \label{eq:13}
    \zeta^{(1)}\cdot B &= L,
  \end{align}
  and, for $k\geq 2$,
  \begin{align}
    \label{eq:14}
    \zeta^{(k)} &= \sum_{l+m=k}
    \widetilde{F}(\zeta^{(l)},\zeta^{(m)}), \\
    \label{eq:15}
    Y_t^{(k)} & = \sum_{l+m=k} \mathbb{E}_t[\int_t^T \widetilde
    f(s,\zeta_s^{(l)},\zeta_s^{(m)}) ds], \quad t\in [0,T],
  \end{align}
  where we sum with respect to all pairs of positive integers $(l,m)$
  which add to $k$.
\end{Theorem}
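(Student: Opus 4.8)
The plan is to recast~\eqref{eq:8} as a quadratic fixed-point equation for $\zeta$ in $\Hbmo$ and to solve it by a power series whose radius of convergence is controlled by a scalar majorant. First I would take conditional expectations in~\eqref{eq:8}: since $\zeta\in\Hbmo$ the integral $\zeta\cdot B$ is a true martingale, whence
\[
  Y_t = a\,\mathbb{E}_t[\Xi] + \mathbb{E}_t\Bigl[\int_t^T \widetilde{f}(s,\zeta_s,\zeta_s)\,ds\Bigr].
\]
Reading off the martingale part of $Y$ (which from~\eqref{eq:8} equals $\zeta\cdot B$) and recalling that the martingale part of $a\mathbb{E}_\cdot[\Xi]$ is $aL$ while, by the very definition of $\widetilde F$, the martingale part of $\mathbb{E}_\cdot[\int_0^T\widetilde f(s,\zeta_s,\zeta_s)\,ds]$ is $\widetilde F(\zeta,\zeta)\cdot B$, I obtain the fixed-point equation
\[
  \zeta = a\,\zeta^{(1)} + \widetilde F(\zeta,\zeta)
\]
in $\Hbmo$, where $\zeta^{(1)}$ is characterized by~\eqref{eq:13}. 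Conversely, given any $\zeta\in\Hbmo$ solving this equation, defining $Y$ by the displayed conditional-expectation formula produces a solution of~\eqref{eq:8}; this equivalence is what lets me transfer uniqueness from~\cite[Theorem A.1]{KramkovPulido:16}.

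Next I would define the coefficients $\zeta^{(k)}$ by~\eqref{eq:13} and~\eqref{eq:14} and prove that $\sum_k \zeta^{(k)}a^k$ converges absolutely in $\Hbmo$ for $\abs a<\rho$. Writing $b_k\set\Hnorm{\bmo}{\zeta^{(k)}}$, the bilinear bound $\Hnorm{\bmo}{\widetilde F(\mu,\nu)}\leq 2\kappa\Theta\,\Hnorm{\bmo}{\mu}\Hnorm{\bmo}{\nu}$ from Lemma~\ref{lem:1} gives the Catalan-type recursion $b_1=\snorm{\bmo}{L}$ and $b_k\leq 2\kappa\Theta\sum_{l+m=k}b_l b_m$. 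I would dominate $b_k$ by the scalars $c_k$ defined with equality in the same recursion; their generating function $\phi(a)=\sum_k c_k a^k$ then solves $2\kappa\Theta\,\phi^2-\phi+\snorm{\bmo}{L}\,a=0$, so that $\phi(a)=\tfrac{1}{4\kappa\Theta}\bigl(1-\sqrt{1-8\kappa\Theta\snorm{\bmo}{L}\,a}\bigr)$ is analytic precisely for $\abs a<\rho$ with $\phi(\abs a)<\tfrac1{4\kappa\Theta}$. This simultaneously yields absolute convergence of the $\zeta$-series in $\Hbmo$ and the a priori bound $\Hnorm{\bmo}{\sum_k\zeta^{(k)}a^k}\leq\phi(\abs a)<\tfrac1{4\kappa\Theta}$, i.e.~\eqref{eq:10}.

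With convergence in hand I would verify that $\widehat\zeta(a)\set\sum_k\zeta^{(k)}a^k$ solves the fixed-point equation: by bilinearity and continuity of $\widetilde F$, absolute convergence justifies expanding $\widetilde F(\widehat\zeta(a),\widehat\zeta(a))$ as the Cauchy product $\sum_{k\geq2}\bigl(\sum_{l+m=k}\widetilde F(\zeta^{(l)},\zeta^{(m)})\bigr)a^k=\sum_{k\geq2}\zeta^{(k)}a^k$, and adding $a\zeta^{(1)}$ returns $\widehat\zeta(a)$. Since $\widehat\zeta(a)$ satisfies~\eqref{eq:10}, the uniqueness statement identifies it with $\zeta(a)$ and the companion $Y$ reconstructed above with $Y(a)$. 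Finally, expanding $\widetilde f(s,\zeta(a)_s,\zeta(a)_s)$ by bilinearity and interchanging summation with conditional expectation (again justified by absolute convergence) produces the coefficients~\eqref{eq:12} and~\eqref{eq:15}; bounding $\Sbmonorm{Y^{(k)}}$ through its martingale part $\zeta^{(k)}\cdot B$ and the potential $\mathbb{E}_\tau[\int_\tau^T\abs{\widetilde f}\,ds]\leq\Theta\sum_{l+m=k}b_l b_m$ shows $\Sbmonorm{Y^{(k)}}$ is controlled by $c_k$ up to a fixed multiplicative constant, giving $\Sbmo$-convergence for $\abs a<\rho$.

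I expect the main obstacle to be the second step: identifying the correct scalar majorant, recognizing that its generating function satisfies a quadratic whose branch-point singularity sits exactly at $a=\rho$, and thereby pinning the radius of convergence to the same threshold $\rho$ that governs local existence. The Banach-space rearrangement of the Cauchy product and the $\Sbmo$ bookkeeping for the $Y$-coefficients are routine once the majorant is in place.
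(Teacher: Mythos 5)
Your proposal is correct and follows the same skeleton as the paper's proof: the same fixed-point map $F(\zeta)=a\zeta^{(1)}+\widetilde{F}(\zeta,\zeta)$, the same recursion~\eqref{eq:14} controlled by the bilinear bound $2\kappa\Theta$ from Lemma~\ref{lem:1}, identification of the sum with $\zeta(a)$ via the uniqueness in \cite[Theorem A.1]{KramkovPulido:16}, and the same reconstruction of the $Y$-expansion (your potential estimate for $Y^{(k)}$ is exactly the paper's Lemma~\ref{lem:3}). The one genuine point of departure is how convergence is established. The paper's Lemma~\ref{lem:2} bounds the partial sums $s_n=\sum_{k\leq n}\Hnorm{\bmo}{\zeta^{(k)}}\rho^k$ by the elementary induction $s_n\leq s_1+2\kappa\Theta (s_{n-1})^2$ with $s_1=1/(8\kappa\Theta)$, concluding $s_n\leq 1/(4\kappa\Theta)$; you instead dominate the norms by the Catalan-type scalar sequence $(c_k)$ solved in closed form through the generating function $\phi(a)=\frac{1}{4\kappa\Theta}\bigl(1-\sqrt{1-8\kappa\Theta\snorm{\bmo}{L}\,a}\bigr)$. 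The two are quantitatively equivalent --- since $c_k\geq 0$, monotone convergence gives $\sum_k c_k\rho^k=\phi(\rho)=1/(4\kappa\Theta)$, which is precisely the paper's bound~\eqref{eq:16} --- but your version buys a bit more: it exhibits $\rho$ as the exact branch point of the majorant, thereby explaining why the radius of analyticity coincides with the local-solvability threshold, and it yields the strict pointwise bound $\Hnorm{\bmo}{\zeta(a)}\leq\phi(\abs{a})<1/(4\kappa\Theta)$ for $\abs{a}<\rho$, so~\eqref{eq:10} holds with room to spare; the paper's induction, in exchange, is shorter and entirely elementary. Two further observations: your Cauchy-product rearrangement of $\widetilde{F}(\widehat\zeta(a),\widehat\zeta(a))$, justified by absolute convergence and boundedness of the bilinear map, is the same computation the paper carries out on the tail $F(\beta_n)-\beta_n$ of the partial sums; and you rightly make explicit the equivalence between $\Hbmo$ fixed points of $F$ and solutions of~\eqref{eq:8} satisfying~\eqref{eq:10} (via taking conditional expectations and matching martingale parts), a step the paper uses silently because it is inherited from the construction behind \cite[Theorem A.1]{KramkovPulido:16} --- spelling it out is what legitimately lets you transfer uniqueness to $\widehat\zeta(a)$ and its companion $Y$.
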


\begin{Remark}
  Based on~\eqref{eq:14}--\eqref{eq:15} and the definition of the
  bilinear map $\widetilde{F}$, the power series expansion of
  $(Y(a),\zeta(a))$ is constructed using a martingale representation
  approach. We refer the reader to~\cite{ContLu:16} and the references
  therein for numerical algorithms to approximate martingale
  representation terms of functionals of Brownian motion.
\end{Remark}

The proof of Theorem~\ref{th:2} relies on some lemmas.

\begin{Lemma}
  \label{lem:2}
  Assume ~\ref{item:1}, \ref{item:2}, and~\ref{item:4}. Let
  $(\zeta^{(k)})_{k\geq 1}$ be given by
  \eqref{eq:13}--\eqref{eq:14}. Then $(\zeta^{(k)})_{k\geq 1} \subset
  \Hbmo$ and
  \begin{equation}
    \label{eq:16}
    \sum_{k=1}^\infty \Hnorm{\bmo}{\zeta^{(k)}} \rho^k \leq
    \frac{1}{4\kappa\Theta}.
  \end{equation}
\end{Lemma}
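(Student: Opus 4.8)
The plan is to reduce the matrix-valued recursion to a scalar majorant and then sum it with a generating function. Throughout write $b_k \set \Hnorm{\bmo}{\zeta^{(k)}}$ and assume $\snorm{\bmo}{L}>0$ (if $L=0$ then every $\zeta^{(k)}$ vanishes and~\eqref{eq:16} is trivial). First I would show $\zeta^{(k)} \in \Hbmo$ by induction on $k$: the base case follows from~\eqref{eq:13} and~\ref{item:2}, since $b_1 = \snorm{\bmo}{L}<\infty$, and for $k\geq 2$ formula~\eqref{eq:14} writes $\zeta^{(k)}$ as a finite sum of terms $\widetilde F(\zeta^{(l)},\zeta^{(m)})$ with $l,m<k$, each in $\Hbmo$ by Lemma~\ref{lem:1}.

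Next I would convert~\eqref{eq:14} into a scalar inequality. Since $\widetilde F$ is bilinear and bounded by $2\kappa\Theta$, the triangle inequality yields, for $k\geq 2$,
\begin{displaymath}
  b_k \leq \sum_{l+m=k}\Hnorm{\bmo}{\widetilde F(\zeta^{(l)},\zeta^{(m)})} \leq 2\kappa\Theta\sum_{l+m=k} b_l b_m,
\end{displaymath}
with $b_1 = \snorm{\bmo}{L}$. I then introduce the majorant $(a_k)$ obeying the same relation with equality, namely $a_1 \set \snorm{\bmo}{L}$ and $a_k \set 2\kappa\Theta\sum_{l+m=k} a_l a_m$ for $k\geq 2$, and verify by induction that $b_k\leq a_k$ for all $k$ (in the inductive step every index in the convolution is strictly below $k$).

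The remaining task is to bound $\sum_k a_k\rho^k$. Passing to the generating function $A(x)\set\sum_{k\geq 1} a_k x^k$, the convolution recursion becomes the quadratic identity $A(x) = \snorm{\bmo}{L}\,x + 2\kappa\Theta\,A(x)^2$, whose branch vanishing at the origin is
\begin{displaymath}
  A(x) = \frac{1-\sqrt{1-8\kappa\Theta\snorm{\bmo}{L}\,x}}{4\kappa\Theta}.
\end{displaymath}
By the definition~\eqref{eq:9} of $\rho$ the radicand is exactly $0$ at $x=\rho$, so $A$ extends continuously to $[0,\rho]$ with $A(\rho)=\frac1{4\kappa\Theta}$. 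Because the $a_k$ are nonnegative, every partial sum $\sum_{k=1}^N a_k x^k$ is dominated by $A(x)$ on $(0,\rho)$; sending $x\to\rho^-$ and then $N\to\infty$ (monotone convergence, equivalently Abel's theorem) gives $\sum_{k\geq1} a_k\rho^k \leq A(\rho)=\frac1{4\kappa\Theta}$, and combining with $b_k\leq a_k$ delivers~\eqref{eq:16}.

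I expect the one genuinely delicate point to be the evaluation at the endpoint $x=\rho$: the generating function has a square-root branch point there, so one cannot manipulate the series freely at $\rho$. It is the nonnegativity of the coefficients $a_k$ that licenses exchanging the limit $x\to\rho^-$ with the summation; everything else is a routine Catalan-number computation.
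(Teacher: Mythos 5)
Your proposal is correct, but it proves the bound by a genuinely different mechanism than the paper. The paper never introduces a majorant sequence or a generating function: it works directly with the partial sums $s_n \set \sum_{k=1}^n \Hnorm{\bmo}{\zeta^{(k)}}\rho^k$, derives from the boundedness of $\widetilde F$ the one-step inequality $s_n \leq s_1 + 2\kappa\Theta (s_{n-1})^2$ with $s_1 = \rho\snorm{\bmo}{L} = \frac{1}{8\kappa\Theta}$, and closes by induction, since $s_{n-1}\leq \frac1{4\kappa\Theta}$ forces $s_n \leq \frac1{8\kappa\Theta} + 2\kappa\Theta\bigl(\frac1{4\kappa\Theta}\bigr)^2 = \frac1{4\kappa\Theta}$. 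The two arguments are secretly governed by the same quadratic: the paper's induction threshold $\frac1{4\kappa\Theta}$ is exactly the (double) root of $x = \frac1{8\kappa\Theta} + 2\kappa\Theta x^2$, i.e.\ the value $A(\rho)$ at the branch point of your generating function, and the definition~\eqref{eq:9} of $\rho$ is precisely what makes the discriminant vanish. What your route buys: the exact Catalan-type formula $a_k = C_{k-1}(2\kappa\Theta)^{k-1}\snorm{\bmo}{L}^k$ shows the estimate~\eqref{eq:16} is \emph{sharp} for the majorant recursion, and identifies $\rho$ as the true radius of convergence (a square-root singularity), which the paper's induction does not reveal. What the paper's route buys: it is shorter, entirely elementary, and sidesteps the endpoint issue you rightly flag as the delicate step --- no Abel/monotone-convergence argument at $x=\rho$ is needed because the induction bounds the partial sums at $\rho$ directly. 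One point you should make explicit in a final write-up: before manipulating $A(x)$ as an analytic function on $(0,\rho)$, justify that the series $\sum_k a_k x^k$ actually converges there and equals the closed-form branch --- e.g.\ by checking that the Taylor coefficients of $\bigl(1-\sqrt{1-8\kappa\Theta\snorm{\bmo}{L}\,x}\bigr)/(4\kappa\Theta)$ are nonnegative and satisfy your recursion, whence they coincide with $(a_k)$ by uniqueness; the formal power-series identity alone does not license the evaluation. With that sentence added, your argument is complete, and your treatment of the boundary via nonnegativity of the coefficients is exactly right.
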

\begin{proof}
  The claim that each $\zeta^{(k)}$ belongs to $\Hbmo$ follows from
  its construction and Lemma~\ref{lem:1}. For $n\geq 1$ define the
  partial sums
  \begin{displaymath}
    s_n\set \sum_{k=1}^n \Hnorm{\bmo}{\zeta^{(k)}} \rho^k.
  \end{displaymath}
  Using the boundedness of the bilinear map $\widetilde{F}$ by
  $2\kappa \Theta$ we obtain
  \begin{align*}
    s_n - s_1 &= \sum_{k=2}^n \Hnorm{\bmo}{\zeta^{(k)}}\rho^k \leq
    \sum_{k=2}^n \left(\sum_{l+m=k}
      \Hnorm{\bmo}{\widetilde{F}(\zeta^{(l)},\zeta^{(m)})}\right)
    \rho^k \\
    &\leq 2\kappa\Theta \sum_{k=2}^n \sum_{l+m=k}
    (\Hnorm{\bmo}{\zeta^{(l)}}\rho^l)(\Hnorm{\bmo}{\zeta^{(m)}}\rho^{m}) \\
    &\leq 2\kappa\Theta \sum_{l,m=1}^{n-1}
    (\Hnorm{\bmo}{\zeta^{(l)}}\rho^{l})(\Hnorm{\bmo}{\zeta^{(m)}}\rho^m)
    \\
    & = 2\kappa\Theta \left(\sum_{k=1}^{n-1}
      \Hnorm{\bmo}{\zeta^{(k)}}\rho^k\right)^2 = 2\kappa\Theta
    (s_{n-1})^2.
  \end{align*}

  To verify~\eqref{eq:16} we use an induction argument.  For $n=1$ we
  have
  \begin{displaymath}
    s_1 =  \Hnorm{\bmo}{\zeta^{(1)}} \rho =  \rho \snorm{\bmo}{L} =
    \frac{1}{8\kappa\Theta}.
  \end{displaymath}
  If now $s_{n-1} \leq 1/({4\kappa\Theta})$, then
  \begin{displaymath}
    s_n\leq s_1 + 2\kappa\Theta (s_{n-1})^2 \leq
    \frac{1}{8\kappa\Theta} + 2\kappa\Theta
    \left(\frac{1}{4\kappa\Theta}\right)^2 = \frac{1}{4\kappa\Theta}
  \end{displaymath}
  and~\eqref{eq:16} follows.
\end{proof}

\begin{Lemma}
  \label{lem:3}
  Assume \ref{item:1} and~\ref{item:4}. For $\mu,\nu\in\Hbmo$ the
  process
  \begin{displaymath}
    X_t \set \mathbb{E}_t\left[
      \int_t^T\widetilde{f}(s,\mu_s,\nu_s)\,ds\right], \quad t\in
    [0,T], 
  \end{displaymath}
  belongs to $\Sbmo$ and
  \begin{displaymath}
    \Sbmonorm{X} \leq 2(1+\kappa)\Theta\Hnorm{\bmo}{\mu}\Hnorm{\bmo}{\nu}. 
  \end{displaymath}
\end{Lemma}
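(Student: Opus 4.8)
The plan is to put $X$ into its canonical additive form $X=X_0+M+A$ and then estimate separately the three terms in the definition of $\Sbmonorm{X}$. To this end I would introduce the absolutely continuous, finite-variation process $G_t\set\int_0^t\widetilde{f}(s,\mu_s,\nu_s)\,ds$, so that $X_t=\mathbb{E}_t[G_T]-G_t$. The point is that the martingale $\mathbb{E}_t[G_T]$ is exactly the object treated in Lemma~\ref{lem:1}: writing $\mathbb{E}_t[G_T]=\mathbb{E}[G_T]+(\zeta\cdot B)_t$ for the $\zeta\in\Hbmo$ produced there, one gets the decomposition $X_t=\mathbb{E}[G_T]+(\zeta\cdot B)_t-G_t$, that is, $X_0=\mathbb{E}[G_T]$, $M=\zeta\cdot B$, and $A=-G$ (note $G_0=0$ and $\mathcal{F}_0$ is trivial).

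For the martingale term I would simply invoke Lemma~\ref{lem:1}, which yields $\snorm{\bmo}{M}=\Hnorm{\bmo}{\zeta}\leq 2\kappa\Theta\,\Hnorm{\bmo}{\mu}\Hnorm{\bmo}{\nu}$. For the other two terms the only input is the bilinear bound $\abs{\widetilde{f}(s,\mu_s,\nu_s)}\leq\Theta\abs{\mu_s}\abs{\nu_s}$ from~\ref{item:4} combined with the conditional Cauchy--Schwarz inequality. Since $\abs{dA_s}=\abs{\widetilde{f}(s,\mu_s,\nu_s)}\,ds$, for any stopping time $\tau$ one has
\[
\mathbb{E}_\tau\!\left[\int_\tau^T\abs{dA}\right]\leq\Theta\,\Bigl(\mathbb{E}_\tau\bigl[\int_\tau^T\abs{\mu_s}^2ds\bigr]\Bigr)^{1/2}\Bigl(\mathbb{E}_\tau\bigl[\int_\tau^T\abs{\nu_s}^2ds\bigr]\Bigr)^{1/2}\leq\Theta\,\Hnorm{\bmo}{\mu}\Hnorm{\bmo}{\nu},
\]
the last step holding because each conditional factor is dominated in $\Lnorm{\infty}{\cdot}$ by the corresponding $\Hbmo$-norm, essentially by definition. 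Taking the supremum over $\tau$ bounds the finite-variation contribution by $\Theta\,\Hnorm{\bmo}{\mu}\Hnorm{\bmo}{\nu}$, and the same estimate at $\tau=0$ (or an unconditional Cauchy--Schwarz) gives $\abs{X_0}=\abs{\mathbb{E}[G_T]}\leq\Theta\,\Hnorm{\bmo}{\mu}\Hnorm{\bmo}{\nu}$.

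Summing the three pieces would then produce $\Sbmonorm{X}\leq(\Theta+2\kappa\Theta+\Theta)\Hnorm{\bmo}{\mu}\Hnorm{\bmo}{\nu}=2(1+\kappa)\Theta\,\Hnorm{\bmo}{\mu}\Hnorm{\bmo}{\nu}$, which is exactly the asserted bound and in particular shows $X\in\Sbmo$. I do not expect a genuine obstacle here: the argument is essentially bookkeeping. The one step that must be handled with care is identifying the martingale component of $X$ with the martingale of Lemma~\ref{lem:1}, so that its $\bmo$-norm is controlled by the \emph{product} $\Hnorm{\bmo}{\mu}\Hnorm{\bmo}{\nu}$ rather than estimated crudely, and then tracking the three constants $\Theta$, $2\kappa\Theta$, $\Theta$ so that they combine to precisely $2(1+\kappa)\Theta$.
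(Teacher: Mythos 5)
Your proof is correct and follows essentially the same route as the paper: the canonical decomposition $X=X_0+M+A$ with the martingale part identified via Lemma~\ref{lem:1} (giving the factor $2\kappa\Theta$), the conditional Cauchy--Schwarz bound $\Theta\,\Hnorm{\bmo}{\mu}\Hnorm{\bmo}{\nu}$ for the finite-variation part and for $\abs{X_0}$, and the same bookkeeping yielding $2(1+\kappa)\Theta$.
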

\begin{proof} The canonical decomposition of the semimartingale $X$
  has the form
  \begin{displaymath}
    X = X_0 + M - A, 
  \end{displaymath} 
  where
  \begin{align*}
    A_t & = \int_0^t \widetilde{f}(s,\mu_s,\nu_s)\,ds, \\
    X_0 &= \mathbb{E}[A_T], \\
    M_t &= \mathbb{E}_t[A_T] - \mathbb{E}[A_T].
  \end{align*}
  By Lemma~\ref{lem:1} we have
  \begin{displaymath}
    \snorm{\bmo}{M} \leq 2 \kappa \Theta \Hnorm{\bmo}{\mu}
    \Hnorm{\bmo}{\nu}. 
  \end{displaymath}
  As in the proof of Lemma~\ref{lem:1} we deduce that for any stopping
  time $\tau$
  \begin{displaymath}
    \mathbb{E}_\tau[\int_\tau^T \abs{dA}] =
    \mathbb{E}_\tau[\int_\tau^T
    \abs{\widetilde{f}(s,\mu_s,\nu_s)}\,ds] \leq  \Theta \Hnorm{\bmo}{\mu}
    \Hnorm{\bmo}{\nu} 
  \end{displaymath}
  and the result readily follows.
\end{proof}

\begin{proof}[Proof of Theorem \ref{th:2}]
  Take $a\in\mathbf{R}$ such that $\abs{a}<\rho$. Recall that
  \ref{item:4} implies \ref{item:3}.  Theorem A.1
  in~\cite{KramkovPulido:16} then implies the existence and uniqueness
  of the solution $\zeta(a)$ satisfying~\eqref{eq:10}. To show that
  \begin{equation}
    \label{eq:17}
    \zeta(a) = \beta \set \sum_{k=1}^\infty \zeta^{(k)} a^k,
  \end{equation}
  we need to verify that $\beta$ is a fixed point of the map
  $\map{F}{\Hbmo}{\Hbmo}$ given by
  \begin{displaymath}
    F(\zeta) \set a\zeta^{(1)} + \widetilde F(\zeta,\zeta).
  \end{displaymath}

  For $n\geq 1$ define the partial sums:
  \begin{displaymath}
    \beta_n\set\sum_{k=1}^n \zeta^{(k)} a^k.
  \end{displaymath}
  In view of Lemma~\ref{lem:2}, the processes $\beta$ and $\beta_n$
  belong to $\Hbmo$ and
  \begin{displaymath}
    \Hnorm{\bmo}{\beta-\beta_n} \leq \sum_{k=n+1}^\infty \Hnorm{\bmo}{\zeta^{(k)}}
    \rho^k \to 0, 
    \quad n\to \infty.
  \end{displaymath}
  The bilinearity of $\widetilde F$ and Lemma~\ref{lem:1} then yield
  \begin{align*}
    &\Hnorm{\bmo}{F(\beta)-F(\beta_n)} = \Hnorm{\bmo}{\widetilde
      F(\beta,\beta)-\widetilde
      F(\beta_n,\beta_n)} \\
    & \qquad =
    \Hnorm{\bmo}{\widetilde F(\beta-\beta_n,\beta+\beta_n)}  \\
    &\qquad \leq 2\kappa \Theta
    \Hnorm{\bmo}{\beta-\beta_n}(\Hnorm{\bmo}{\beta}+\Hnorm{\bmo}{\beta_n})
    \to 0, \quad n\to \infty,
  \end{align*}
  and to conclude the proof of~\eqref{eq:17} we only have to show that
  \begin{equation}
    \label{eq:18}
    \Hnorm{\bmo}{F(\beta_n)-\beta_n} \to 0,
    \quad n\to \infty.
  \end{equation}

  From the bilinearity of $\widetilde{F}$ and the construction of
  $(\zeta^{(k)})$ we deduce that
  \begin{align*}
    F(\beta_n) - \beta_n &= \widetilde{F}(\sum_{k=1}^n \zeta^{(k)}
    a^k,\sum_{k=1}^n
    \zeta^{(k)} a^k) - \sum_{k=2}^n \zeta^{(k)} a^k \\
    &= \sum_{l,m=1}^n \widetilde{F}(\zeta^{(l)},\zeta^{(m)})a^{l+m} -
    \sum_{k=2}^n \left(\sum_{l+m=k}
      \widetilde{F}(\zeta^{(l)},\zeta^{(m)})\right)
    a^k \\
    &= \sum_{\substack{1\leq l,m\leq n \\ l+m>n}}
    \widetilde{F}(\zeta^{(l)},\zeta^{(m)})a^{l+m}.
  \end{align*}
  Using the boundedness of the bilinear map $\widetilde F$ by $2
  \kappa \Theta$ we obtain
  \begin{align*}
    &\Hnorm{\bmo}{F(\beta_n) - \beta_n} \leq 2\kappa \Theta
    \sum_{l+m>n}
    \Hnorm{\bmo}{\zeta^{(l)}}\Hnorm{\bmo}{\zeta^{(m)}} \rho^{l+m} \\
    &\qquad\leq 2\kappa \Theta \left(\left(\sum_{k=1}^\infty
        \Hnorm{\bmo}{\zeta^{(k)}} \rho^k\right)^2 -
      \left(\sum_{k=1}^{[n/2]}\Hnorm{\bmo}{\zeta^{(k)}}
        \rho^k\right)^2\right)
  \end{align*}
  and~\eqref{eq:18} follows from Lemma~\ref{lem:2}.

  The power series representation for $Y(a)$ and
  formulas~\eqref{eq:12} and~\eqref{eq:15} for its coefficients
  readily follow from the expansion for $\zeta(a)$ as soon as we write
  $Y(a)$ as
  \begin{displaymath}
    Y_t(a) = a\mathbb{E}_t[\Xi] + \mathbb{E}_t[\int_t^T \widetilde
    f(s,\zeta_s(a),\zeta_s(a))ds], \quad t\in [0,T], 
  \end{displaymath}
  and use the bilinearity of $\widetilde f$ and Lemma~\ref{lem:3}.
\end{proof}

\section{Applications to a price impact model}
\label{sec:model}

We consider the financial model of price impact studied in
\citet*{GarlPederPotes:09}, \citet{Germ:11},
and~\citet{KramkovPulido:16}. There is a representative dealer whose
preferences regarding terminal wealth are modeled by the exponential
utility
\begin{displaymath}
  U(x) = -\frac1a e^{-ax}, \quad x\in \mathbf{R}.
\end{displaymath}
The risk aversion coefficient $a>0$ defines the strength of the price
impact effect. In particular, as $a\downarrow 0$ we are getting the
classical impact-free model of mathematical finance; see
section~\ref{sec:asympt-expans-small}.

The financial market consists of a bank account and $n$ stocks. The
bank account pays zero interest rate . The stocks pay dividends $\Psi
= (\Psi^i)_{i=1,\dots,n}$ at maturity $T$; each $\Psi^i$ is a random
variable. While the terminal stocks' prices $S_T$ are always given by
$\Psi$, their intermediate values $S_t$ on $[0,T)$ are affected by an
exogenous demand process $\gamma$ through the following equilibrium
mechanism.

\begin{Definition}
  \label{def:1}
  A predictable process $\gamma$ with values in $\mathbf{R}^n$ is
  called a \emph{demand}. The demand $\gamma$ is \emph{viable} if
  there is an $n$-dimensional semimartingale of \emph{stock prices}
  $S$ with terminal value $S_T=\Psi$ such that the \emph{pricing}
  probability measure $\mathbb{Q}$ is well-defined by
  \begin{displaymath}
    \frac{d\mathbb{Q}}{d\mathbb{P}} \set \frac{U'(\int_0^T \gamma
      dS)}{\mathbb{E}[U'(\int_0^T \gamma
      dS)]} =  \frac{e^{-a\int_0^T \gamma
        dS}}{\mathbb{E}[e^{-a\int_0^T \gamma dS}]}
  \end{displaymath}
  and $S$ and the stochastic integral $\gamma\cdot S$ are uniformly
  integrable martingales under $\mathbb{Q}$.
\end{Definition}

Lemma 2.2 in~\cite{KramkovPulido:16} clarifies the economic meaning of
Definition~\ref{def:1}. It shows that a demand $\gamma$ is viable if
and only if it defines the optimal number of stocks for the dealer
trading at stock prices $S=S(\gamma)$.

Under~\ref{item:1}, for a viable demand $\gamma$ accompanied by
stocks' prices $S$ and the pricing measure $\mathbb{Q}$ there are
unique processes $\alpha\in \Hspace{0}(\mathbf{R}^d)$ and $\sigma\in
\Hspace{0}(\mathbf{R}^{n\times d})$, called, respectively, the
\emph{market price of risk} and the \emph{volatility}, such that
\begin{align*}
  \frac{d\mathbb{Q}}{d\mathbb{P}} &= e^{-\int_0^T \alpha\, dB
    -\frac12 \int_0^T \abs{\alpha_t}^2 dt}, \\
  S_t &= S_0 + \int_0^t \sigma_s \alpha_s ds + \int_0^t \sigma dB.
\end{align*}

Theorem 3.1 in~\cite{KramkovPulido:16} characterizes $S$, $\alpha$,
and $\sigma$ in terms of solutions to a system of quadratic BSDEs.
More precisely, it states that a demand $\gamma$ is viable and is
accompanied by the stock prices $S$ if and only if there are a
one-dimensional semimartingale $R$ and predictable processes $\eta\in
\Hspace{0}(\mathbf{R}^d)$, and $\theta \in
\Hspace{0}(\mathbf{R}^{n\times d})$, such that, for every $ t\in
[0,T]$,
\begin{align}
  \label{eq:19}
  aR_t &= \frac12 \int_t^T (\abs{\theta^*_s\gamma_s}^2 -
  \abs{\eta_s}^2)
  ds - \int_t^T \eta dB, \\
  \label{eq:20}
  aS_t &= a\Psi - \int_t^T \theta_s(\eta_s+\theta_s^*\gamma_s) ds -
  \int_t^T \theta dB,
\end{align}
and such that the stochastic exponential $Z \set \mathcal{E}(-(\eta +
\theta^*\gamma)\cdot B)$ and the processes $ZS$ and $Z(\gamma\cdot S)$
are uniformly integrable martingales.

In this case, $Z$ is the density process of the pricing measure
$\mathbb{Q}$, and the market price of risk $\alpha$ and the volatility
$\sigma$ are given by
\begin{align}
  \label{eq:21}
  \alpha &= \eta + \theta^*\gamma, \\
  \label{eq:22}
  \sigma &= \theta/a.
\end{align}
The value of the auxiliary process $R$ at time $t$ can be written as
\begin{equation*}
  R_t \set U^{-1}\left(\mathbb{E}_t\left[U\left(\int_t^T
        \gamma dS\right)\right]\right) \\
  = - \frac1a \log\left(\mathbb{E}_t\left[e^{-a \int_t^T \gamma
        dS}\right]\right),
\end{equation*}
and thus represents the dealer's \emph{certainty equivalent value} of
the \emph{remaining gain} $\int_t^T \gamma dS$.

\begin{Remark}
  \label{rem:1}
  From Definition~\ref{def:1} we deduce that the dependence of stocks'
  prices $S = S(\gamma,a,\Psi)$ on the viable demand $\gamma$, on the
  risk-aversion coefficient $a$, and on the dividend $\Psi$ has the
  following homogeneity properties: for $b>0$,
  \begin{equation*}
    S(b\gamma,a,\Psi) = S(\gamma,ba,\Psi) = \frac1b S(\gamma,a,b\Psi).
  \end{equation*}
  This yields similar properties of the market prices of risk $\alpha
  = \alpha(\gamma,a,\Psi)$ and of the volatilities $\sigma =
  \sigma(\gamma,a,\Psi)$:
  \begin{equation}
    \label{eq:23}
    \begin{split}
      \alpha(b\gamma,a,\Psi) &= \alpha(\gamma,ba,\Psi) =
      \alpha(\gamma,a,b\Psi), \\
      \sigma(b\gamma,a,\Psi) &= \sigma(\gamma,ba,\Psi) = \frac1b
      \sigma(\gamma,a,b\Psi).
    \end{split}
  \end{equation}
\end{Remark}

\subsection{Stability with respect to demand}
\label{sec:stability}

A demand $\gamma$ is called \emph{simple} if it has the form:
\begin{displaymath}
  \gamma=\sum_{i=0}^{m-1} \theta_{i}1_{(\tau_i,\tau_{i+1}]},
\end{displaymath}
where $0=\tau_0 < \tau_1 < \dots < \tau_m = T$ are stopping times and
$\theta_i$ is a $\mathcal{F}_{\tau_i}$-measurable random variable with
values in $\mathbf{R}^n$, $i=0,\dots,m-1$.  Theorem~1 in
\cite{Germ:11} shows that every bounded simple demand $\gamma$ is
viable provided that the dividends $\Psi = (\Psi^i)$ have all
exponential moments.  Moreover, in this case, the price process
$S=S(\gamma)$ is unique and is constructed explicitly by backward
induction.

For general (nonsimple) demands the situation is more
involved. As~\cite[Proposition 4.3]{KramkovPulido:16} shows, even for
bounded dividends $\Psi$ and demands $\gamma$, either existence or
uniqueness of prices $S=S(\gamma)$ may fail.  On a positive side, by
\cite[Theorem 4.1]{KramkovPulido:16}, there is a constant $c=c(n)>0$
(dependent only on the number of stocks $n$) such that if
\begin{equation}
  \label{eq:24}
  a \Hnorm{\infty}{\gamma} \Lnorm{\bmo}{\Psi-\mathbb{E}[\Psi]} \leq c, 
\end{equation}
then the prices $S=S(\gamma)$ exist and are unique.

The following theorem shows that under~\eqref{eq:24} the prices
$S=S(\gamma)$ are stable under small changes in the demand
$\gamma$. In particular, they can be well-approximated by the prices
originated from simple demands.

\begin{Theorem}
  \label{th:3}
  Assume \ref{item:1} and let $p > 1$. There is a constant
  $c=c(n,p)>0$ such that if $(\gamma^m)_{m\geq 1}$ and $\gamma$ are
  elements of $\Hinfty(\mathbf{R}^{n})$ such that
  \begin{equation}
    \label{eq:25}
    a \Hnorm{\infty}{\gamma^m} \Lnorm{\bmo}{\Psi-\mathbb{E}[\Psi]}
    \leq c, \quad m\geq 1,
  \end{equation}
  and
  \begin{displaymath}
    \mathbb{E}\left[\int_0^T \abs{\gamma^m_t - \gamma_t} dt\right]
    \to 0, \quad m\to \infty,
  \end{displaymath}
  then $(\gamma^m)_{m\geq 1}$ and $\gamma$ are viable demands and the
  corresponding stock prices $(S^m)_{m\geq 1}$ and $S$, volatilities
  $(\sigma^m)_{m\geq 1}$ and $\sigma$, and the market prices of risk
  $(\alpha^m)_{m\geq 1}$ and $\alpha$ converge as
  \begin{equation}
    \label{eq:26}
    \Snorm{p}{S^m - S} + \Hnorm{p}{\sigma^m - \sigma} + \Hnorm{p}{\alpha^m -
      \alpha} \to 0, \quad m\to \infty.
  \end{equation}
\end{Theorem}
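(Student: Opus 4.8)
The plan is to recast the characterization \eqref{eq:19}--\eqref{eq:20} as a single $(n+1)$-dimensional quadratic BSDE of the form \eqref{eq:1} and then read off \eqref{eq:26} from the $\Sspace{p}$/$\Hspace{p}$ stability estimates \eqref{eq:6}--\eqref{eq:7} of Theorem~\ref{th:1}. Stacking the scalar equation for $R$ and the $n$-dimensional equation for $S$, I would set $Y\set(aR,S)$ with martingale integrand $\zeta\set(\eta,\theta)$ and terminal value $\Xi\set(0,\Psi)$; the driver $f(t,z)$ is then purely quadratic in $z$ (its bilinear coefficients are affine, resp.\ quadratic, in $\gamma$), so \ref{item:3} holds and $f(t,0)=0$. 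The decisive structural point is that $\Xi=(0,\Psi)$ does \emph{not} depend on the demand: running this construction for $\gamma$ and for each $\gamma^m$ produces BSDEs of the form \eqref{eq:1} and \eqref{eq:3} with \emph{identical} terminal data, so that $\Xi'-\Xi=0$ and $L'-L=0$, and the two solutions differ only through their drivers. Existence, uniqueness, and the a priori $\Hbmo$-bounds on $(\eta,\theta)$ and $(\eta^m,\theta^m)$ are supplied by \cite[Theorem~4.1]{KramkovPulido:16} under \eqref{eq:24} and \eqref{eq:25}; I would first record that $\gamma$ itself is admissible by passing \eqref{eq:25} to an a.e.\ limit along an $L^1$-convergent subsequence (the pointwise bound $a\abs{\gamma^m_t}\Lnorm{\bmo}{\Psi-\E[\Psi]}\le c$ survives in the limit), so $\gamma$ is viable.

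Comparing the two drivers, the $\gamma$-dependence enters only through $\tfrac12\abs{\theta^*\gamma}^2$ and $-\theta\theta^*\gamma$. Using the elementary identity $\abs{p}^2-\abs{q}^2=(p-q)^*(p+q)$ and the uniform boundedness of $\gamma,\gamma^m$ guaranteed by \eqref{eq:25}, I would obtain \eqref{eq:4} with
\[
  \delta^m_t\le C\,\abs{\gamma^m_t-\gamma_t},\qquad C=C\bigl(n,a,\textstyle\sup_m\Hnorm{\infty}{\gamma^m},\Hnorm{\infty}{\gamma}\bigr).
\]
To invoke Theorem~\ref{th:1} I must secure the local bound \eqref{eq:5}, and here I would shrink the constant $c$ in \eqref{eq:25} so that, combined with the a priori bounds from \cite[Theorem~4.1]{KramkovPulido:16} (and the homogeneity \eqref{eq:23}, which shows the problem depends on $(\gamma,a)$ only through $a\gamma$ and thus ties the effective quadratic constant to $a\Hnorm{\infty}{\gamma}$), one has $\Hnorm{\bmo}{\zeta}+\Hnorm{\bmo}{\zeta^m}\le c/\Theta$ uniformly in $m$. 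Applying \eqref{eq:6}--\eqref{eq:7} with $\Xi^m-\Xi=0$ and $L^m_T-L_T=0$ then collapses both right-hand sides to the single driver-perturbation term
\[
  \Hnorm{p}{\zeta^m-\zeta}+\Snorm{p}{Y^m-Y}\le C\,\Hnorm{2p}{\sqrt{\delta^m}\,\zeta}^2 = C\,\Lnorm{p}{\int_0^T\delta^m_s\abs{\zeta_s}^2\,ds}.
\]

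The main obstacle is showing this right-hand side vanishes given only the $L^1(dt\times\P)$-convergence $\E[\int_0^T\abs{\gamma^m_t-\gamma_t}\,dt]\to0$. I would argue along subsequences. Since $\zeta\in\Hbmo$, the energy (John--Nirenberg) inequality gives $\int_0^T\abs{\zeta_s}^2\,ds\in L^p$, and $\delta^m_t\le C\abs{\gamma^m_t-\gamma_t}$ is uniformly bounded; an $L^1$-convergent subsequence of $\gamma^m$ converges $dt\times\P$-a.e., so dominated convergence in $s$ (dominated by $C'\abs{\zeta_s}^2\in L^1(ds)$ for a.e.\ $\omega$) yields $\int_0^T\delta^m_s\abs{\zeta_s}^2\,ds\to0$ a.s., and a second dominated convergence in $\omega$ (dominated by $C'\int_0^T\abs{\zeta_s}^2\,ds\in L^p$) upgrades this to $L^p$. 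As every subsequence admits a further subsequence along which the term vanishes, the whole sequence converges, whence $\Hnorm{p}{\zeta^m-\zeta}\to0$ and $\Snorm{p}{Y^m-Y}\to0$.

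Finally I would translate back. The bound on $Y^m-Y$ gives $\Snorm{p}{S^m-S}\to0$, while $\Hnorm{p}{\zeta^m-\zeta}\to0$ gives $\Hnorm{p}{\theta^m-\theta}\to0$, hence $\Hnorm{p}{\sigma^m-\sigma}=\tfrac1a\Hnorm{p}{\theta^m-\theta}\to0$ by \eqref{eq:22}, and $\Hnorm{p}{\eta^m-\eta}\to0$. For the market price of risk \eqref{eq:21} I would split $\alpha^m-\alpha=(\eta^m-\eta)+(\theta^m-\theta)^*\gamma^m+\theta^*(\gamma^m-\gamma)$: the first two terms vanish in $\Hspace{p}$ (the second because $\Hnorm{\infty}{\gamma^m}$ is uniformly bounded), and $\theta^*(\gamma^m-\gamma)$ vanishes by the same subsequence/dominated-convergence argument as above, now applied to $\int_0^T\abs{\theta_s}^2\abs{\gamma^m_s-\gamma_s}^2\,ds$. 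Summing the three convergences gives \eqref{eq:26}. The only genuinely delicate points are the calibration of $c$ so that \eqref{eq:25} forces \eqref{eq:5} uniformly in $m$, and the passage from $L^1$-convergence of the demands to $L^p$-convergence of the quadratic perturbation term.
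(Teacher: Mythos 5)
Your proposal is correct and follows essentially the same route as the paper's proof: normalize via the homogeneity relations \eqref{eq:23}, obtain viability and the uniform $\Hbmo$ bounds from \cite[Theorem 4.1]{KramkovPulido:16}, apply the $\Sspace{p}$/$\Hspace{p}$ estimates of Theorem~\ref{th:1} to the stacked system \eqref{eq:19}--\eqref{eq:20} (whose terminal condition is demand-independent, so only the driver term $\Lnorm{p}{\int_0^T\abs{\gamma^m_t-\gamma_t}(\abs{\alpha_t}^2+\abs{\sigma_t}^2)\,dt}$ survives), and conclude by dominated convergence. You merely make explicit several steps the paper compresses --- the bound $\delta^m_t\leq C\abs{\gamma^m_t-\gamma_t}$, the subsequence argument upgrading $L^1$-convergence of the demands to $L^p$-convergence of the quadratic perturbation, and the separate treatment of the cross term $\theta^*(\gamma^m-\gamma)$ in $\alpha^m-\alpha$ --- all of which are sound.
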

\begin{proof}
  Observe that the self-similarity relations~\eqref{eq:23} for the
  market prices of risk and volatilities allow us to assume that
  \begin{displaymath}
    a = 1 \geq \Hnorm{\infty}{\gamma^m}, \quad m\geq 1.
  \end{displaymath}
  Clearly, $\gamma$ satisfies~\eqref{eq:24} with the same constant $c$
  as in~\eqref{eq:25}. By~\cite[Theorem 4.1]{KramkovPulido:16}, we can
  choose $c=c(n)$ so that the demands $(\gamma^m)$ and $\gamma$ are
  viable and are accompanied by unique stock prices.  Using
  Theorem~\ref{th:1} and the BSDE
  characterizations~\eqref{eq:19}--\eqref{eq:22} we can also choose
  $c=c(n,p)$ so that
  \begin{displaymath}
    \Snorm{p}{S^m - S} + \Hnorm{p}{\sigma^m - \sigma} + \Hnorm{p}{\alpha^m -
      \alpha} \leq C
    \Lnorm{p}{\int_0^T \abs{\gamma^m_t - \gamma_t} (\abs{\alpha_t}^2
      +\abs{\sigma_t}^2) dt} 
  \end{displaymath}
  for some $C=C(n,p)$. This yields~\eqref{eq:26} by the dominated
  convergence theorem.
\end{proof}

\begin{Remark}
  Using Theorem~\ref{th:1} we can prove an analogous convergence
  result of prices $(S^m)$ and their local characteristics $(\alpha^m,
  \sigma^m)$ in $\bmo$ provided that
  $\Hnorm{\infty}{\gamma^m-\gamma}\to 0$. It is important to
  highlight, however, that simple demands are not dense in the space
  of bounded demands with respect to the norm $\Hnorm{\infty}{\cdot}$.
\end{Remark}

\subsection{Asymptotic expansion for small risk-aversion}
\label{sec:asympt-expans-small}

As the risk aversion coefficient $a$ approaches zero, the price impact
effect vanishes and we obtain a classical model of mathematical
finance.  Theorem~\ref{th:4} below provides analytic expansions of
volatilities and market prices of risk in the neighborhood of
$a=0$. The terms in these expansions are computed recursively, by
martingale representation and, thus, are quite explicit.

We write $z\in \mathbf{R}^{(n+1)\times m}$ as $z=(z_1,z_2)$ with
$z_1\in \mathbf{R}^m$ and $z_2 \in \mathbf{R}^{n\times m}$, the
decomposition of $(n+1)\times m$-dimensional matrix on its first and
subsequent rows; hereafter $m=1$ or $d$, the dimension of underlying
Brownian motion from~\ref{item:1}.

For a vector $w\in \mathbf{R}^n$ consider the bilinear map:
\begin{displaymath}
  \map{g(\cdot;w) = (g_1,g_2)(\cdot;w)}{\mathbf{R}^{(n+1)\times d}\times
    \mathbf{R}^{(n+1)\times d}}{\mathbf{R}^{n+1}}
\end{displaymath}
defined for $u=(u_1,u_2)$ and $v=(v_1,v_2)$ from
$\mathbf{R}^{(n+1)\times d}$ as
\begin{align*}
  g_1(u,v;w) &\set \frac12(\ip{u_2^*w}{v_2^*w}
  - \ip{u_1}{v_1}) \\
  g_2(u,v;w) & \set -\frac12(u_2v_1 + v_2u_1 + (u_2v_2^* +
  v_2u_2^*)w),
\end{align*}
where $\ip{x}{y}$ denotes the scalar product of $x,y\in \mathbf{R}^m$.

Take $\gamma\in \Hinfty(\mathbf{R}^n)$. Lemma~\ref{lem:1} shows that
for $\mu$ and $\nu$ in $\Hbmo(\mathbf{R}^{(n+1)\times d})$ there is a
unique $\zeta \in \Hbmo(\mathbf{R}^{(n+1)\times d})$ such that
\begin{equation}
  \label{eq:27}
  \int_0^t \zeta dB = \mathbb{E}_t[\int_0^T g(\mu_s,\nu_s;\gamma_s)ds] -
  \mathbb{E}[\int_0^T g(\mu_s,\nu_s;\gamma_s)ds].
\end{equation}
Hence, we can define a bilinear map
\begin{displaymath}
  \map{G(\cdot,\cdot;\gamma)}{\Hbmo(\mathbf{R}^{(n+1)\times d})\times
    \Hbmo(\mathbf{R}^{(n+1)\times d})}{\Hbmo(\mathbf{R}^{(n+1)\times
      d})}
\end{displaymath}
with $\zeta = G(\mu,\nu;\gamma)$ given by \eqref{eq:27}.

Denote also by $S(0)$ and $\sigma(0)$ the unperturbed stocks' prices
and volatilities corresponding to the case $\gamma=0$:
\begin{displaymath}
  S_t(0)= \mathbb{E}_t[\Psi] = S_0(0) + \int_0^t \sigma(0) dB, \quad
  t\in [0,T].
\end{displaymath}

\begin{Theorem}
  \label{th:4}
  Assume~\ref{item:1} and that
  \begin{displaymath}
    0<\Lnorm{\bmo}{\Psi-\mathbb{E}[\Psi]}<\infty.
  \end{displaymath}
  There is a constant $c=c(n)>0$ such that if $\gamma \in
  \Hinfty(\mathbf{R}^n)$, $\gamma\not=0$, and the risk-aversion
  satisfies
  \begin{equation}
    \label{eq:28}
    0<a < \rho \set
    \frac{c}{\Hnorm{\infty}{\gamma} \Lnorm{\bmo}{\Psi - \mathbb{E}[\Psi]}},
  \end{equation}
  then $\gamma$ is a viable demand. The price $S(\gamma;a)$ is unique
  and admits the power series expansion in $\Sbmo$:
  \begin{displaymath}
    S(\gamma;a) = S(0) + \sum_{k=1}^\infty S^{(k)} a^k, \quad
    a<\rho. 
  \end{displaymath}
  The market price of risk $\alpha(\gamma;a)$ and the volatility
  $\sigma(\gamma;a)$ have the power series expansions in $\Hbmo$:
  \begin{align*}
    \alpha(\gamma;a) &= \sum_{k=1}^\infty (\zeta_1^{(k)} +
    (\zeta_2^{(k)})^*\gamma) a^k,  \quad a<\rho,\\
    \sigma(\gamma;a) &= \sum_{k=0}^\infty \zeta_2^{(k+1)} a^k, \quad
    a<\rho.
  \end{align*}
  Here the coefficients $\zeta^{(k)} = (\zeta_1^{(k)},\zeta_2^{(k)})$,
  $k\geq 1$, in $\Hbmo(\mathbf{R}^{(n+1)\times d})$ are given
  recursively as
  \begin{align}
    \zeta^{(1)} &= (\zeta^{(1)}_1, \zeta^{(1)}_2) =
    (0,\sigma(0)), \label{eq:29}\\
    \zeta^{(k)} &= \sum_{l+m=k} {G}(\zeta^{(l)},\zeta^{(m)};\gamma),
    \quad k\geq 2, \label{eq:30}
  \end{align}
  and the coefficients $(S^{(k)})_{k\geq 1}
  \subset\Sbmo(\mathbf{R}^n)$ are given by
  \begin{displaymath}
    S^{(k)}_t = -\sum_{l+m=k+1} \mathbb{E}_t[\int_t^T \zeta_{2,s}^{(l)}
    (\zeta_{1,s}^{(m)} + (\zeta_{2,s}^{(m)})^*\gamma_s) ds],
    \quad t\in [0,T].
  \end{displaymath}
\end{Theorem}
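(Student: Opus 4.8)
The plan is to view the price system \eqref{eq:19}--\eqref{eq:20} as a single $(n+1)$-dimensional purely quadratic BSDE of the form \eqref{eq:8} and to read the expansion directly off Theorem~\ref{th:2}. I would set $\bar Y\set(aR,aS)$ (so that, in the notation $z=(z_1,z_2)$ with $m=1$, the first coordinate $\bar Y_1=aR$ and the $S$-block $\bar Y_2=aS$), $\zeta\set(\eta,\theta)$, terminal value $\Xi\set(0,\Psi)$, and driver $\widetilde f(t,u,v)\set g(u,v;\gamma_t)$, so that $f(t,z)=g(z,z;\gamma_t)$. Substituting $u=v=(\eta,\theta)$ and $w=\gamma_t$ into the definitions of $g_1$ and $g_2$ reproduces exactly the drivers $\tfrac12(\abs{\theta^*\gamma}^2-\abs{\eta}^2)$ of \eqref{eq:19} and $-\theta(\eta+\theta^*\gamma)$ of \eqref{eq:20}; hence \eqref{eq:19}--\eqref{eq:20} is precisely \eqref{eq:8} for these data.

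I would then check the hypotheses of Theorem~\ref{th:2}. Assumption~\ref{item:2} holds since $L_t=\mathbb{E}_t[\Xi]-\mathbb{E}[\Xi]=(0,\mathbb{E}_t[\Psi]-\mathbb{E}[\Psi])$, whence $\snorm{\bmo}{L}=\Lnorm{\bmo}{\Psi-\mathbb{E}[\Psi]}\in(0,\infty)$; assumption~\ref{item:4} holds because $g(\cdot,\cdot;\gamma_t)$ is bilinear by construction, is symmetric (both $g_1$ and $g_2$ are invariant under $u\leftrightarrow v$), and, via $\abs{u_2^*w}\leq\abs{u_2}\abs{w}$, is bounded by some $\Theta$ depending on $\Hnorm{\infty}{\gamma}$. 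Theorem~\ref{th:2} then delivers the unique solution satisfying \eqref{eq:10} and its power series, whose coefficients specialize to the claimed ones: the identity $\zeta^{(1)}\cdot B=L$ together with $\mathbb{E}_t[\Psi]-\mathbb{E}[\Psi]=(\sigma(0)\cdot B)_t$ forces $\zeta^{(1)}=(0,\sigma(0))$, i.e.\ \eqref{eq:29}, while \eqref{eq:14} with $\widetilde F=G(\cdot,\cdot;\gamma)$ is exactly \eqref{eq:30}. Finally I would translate through $\alpha=\zeta_1+\zeta_2^*\gamma$ (eq.\ \eqref{eq:21}), $\sigma=\zeta_2/a$ (eq.\ \eqref{eq:22}), and $S=\bar Y_2/a$: dividing the series for $\zeta_2$ and $\bar Y_2$ by $a$ produces the index shifts in the expansions of $\sigma$ and $S$, and rearranging the symmetric sum of the $g_2$-part of \eqref{eq:15} collapses it to the displayed formula for $S^{(k)}=\bar Y_2^{(k+1)}$.

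The step I expect to be the real obstacle is the sharp radius \eqref{eq:28}. The bound on $g$ is \emph{not} homogeneous in $w$: the term $\tfrac12\ip{u_2^*w}{v_2^*w}$ in $g_1$ is quadratic in $w$, so the smallest uniform constant is $\Theta\sim\Hnorm{\infty}{\gamma}^2$, and a naive application of Theorem~\ref{th:2} would only yield convergence for $a\lesssim 1/(\Hnorm{\infty}{\gamma}^2\Lnorm{\bmo}{\Psi-\mathbb{E}[\Psi]})$ --- short of the claimed $\rho\propto 1/\Hnorm{\infty}{\gamma}$. I would close this gap with the homogeneity \eqref{eq:23}: writing $\gamma=\Hnorm{\infty}{\gamma}\,\widehat\gamma$ with $\Hnorm{\infty}{\widehat\gamma}=1$ reduces to a normalized demand for which $g(\cdot,\cdot;\widehat\gamma_t)$ is bounded by an absolute constant $\Theta_0$ and the effective parameter is $\widehat a=a\Hnorm{\infty}{\gamma}$; Theorem~\ref{th:2} in the normalized variables then converges for $\widehat a<1/(8\kappa\Theta_0\Lnorm{\bmo}{\Psi-\mathbb{E}[\Psi]})$, which is exactly \eqref{eq:28} with $c=1/(8\kappa\Theta_0)$. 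Equivalently, an induction on \eqref{eq:30} shows that $\zeta_1^{(k)}$ and $\zeta_2^{(k)}$ are homogeneous in $\gamma$ of degrees $k$ and $k-1$ (the differing $w$-degrees of $g_1$ and $g_2$ being arranged precisely so that each $\alpha^{(k)}=\zeta_1^{(k)}+(\zeta_2^{(k)})^*\gamma$ has degree $k$), whereupon the normalized Lemma~\ref{lem:2} estimate upgrades to $\sum_k\Hnorm{\bmo}{\zeta^{(k)}}a^k<\infty$ for $a<\rho$. Viability and uniqueness of the prices on this range are not new: since \eqref{eq:28} coincides with \eqref{eq:24}, they follow from \cite[Theorem~4.1]{KramkovPulido:16} exactly as in the proof of Theorem~\ref{th:3}, the $\Hbmo$-membership of $(\eta,\theta)$ securing the uniform integrability of $\mathcal{E}(-(\eta+\theta^*\gamma)\cdot B)$ and of the associated price martingales.
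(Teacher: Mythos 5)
Your proposal is correct and takes essentially the same route as the paper's proof: both recast \eqref{eq:19}--\eqref{eq:20} as an $(n+1)$-dimensional purely quadratic BSDE with bilinear driver $g(\cdot,\cdot;\gamma)$, invoke \cite[Theorem~4.1]{KramkovPulido:16} for viability and uniqueness of $S$, apply Theorem~\ref{th:2} after normalizing the demand via the self-similarity \eqref{eq:23} together with the induction-verified homogeneity of the coefficients (degrees $k$ for $\zeta_1^{(k)}$ and $k-1$ for $\zeta_2^{(k)}$), and translate back through \eqref{eq:21}--\eqref{eq:22} --- and your "real obstacle," the non-homogeneity of $g$ in $w$ and its resolution by rescaling so that the radius \eqref{eq:28} is linear in $1/\Hnorm{\infty}{\gamma}$, is exactly the device the paper uses. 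The only cosmetic difference is that the paper additionally normalizes $\Hnorm{\bmo}{\sigma(0)}=1$, whereas you keep $\Lnorm{\bmo}{\Psi-\mathbb{E}[\Psi]}$ explicit inside the radius supplied by Theorem~\ref{th:2}.
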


\begin{Remark}
  \label{rem:2}
  The leading price impact coefficient in the expansion for stock
  prices is given by
  \begin{displaymath}
    S^{(1)}_t = -\mathbb{E}_t\left[\int_t^T\sigma_s(0)
      \sigma_s(0)^*\gamma_s\,ds\right], \quad t\in [0,T]. 
  \end{displaymath}
  This result had been obtained earlier in~\cite[Theorem~2]{Germ:11}
  for a simple demand.
\end{Remark}

\begin{proof}[Proof of Theorem~\ref{th:4}.]
  Theorem 4.1 in~\cite{KramkovPulido:16} yields the existence of a
  constant $c=c(n)$ such that, under~\eqref{eq:28}, $\gamma$ is a
  viable demand accompanied by the unique price process $S = S(\gamma,
  a)$. Observe that the dependence of the coefficients $\zeta^{(k)}$
  on $\gamma$ and $\sigma(0)$ has the homogeneity properties:
  \begin{align*}
    \zeta_1^{(k)}(b\gamma,\sigma(0)) &=
    \zeta_1^{(k)}(\gamma,b\sigma(0)) =
    b^k\zeta_1^{(k)}(\gamma,\sigma(0)), \\
    b\zeta_2^{(k)}(b\gamma,\sigma(0)) &=
    \zeta_2^{(k)}(\gamma,b\sigma(0)) =
    b^k\zeta_2^{(k)}(\gamma,\sigma(0)), \quad b>0,
  \end{align*}
  which can be verified by induction. In view of these properties and
  the self-similarity relations~\eqref{eq:23} for the market prices of
  risk and volatilities, we can assume, without a loss in generality,
  that
  \begin{displaymath}
    \Hnorm{\infty}{\gamma} = \Hnorm{\bmo}{\sigma(0)} = 1.
  \end{displaymath}
  In this case, the stochastic bilinear map $g(\cdot,\cdot;\gamma)$ is
  bounded by a constant $\Theta = \Theta(n)$ such that
  \begin{displaymath}
    \abs{g(u,v;w)} \leq \Theta \abs{u}\abs{v} \;\text{for every}\;
    w\in \mathbf{R}^n\;\text{with}\;\abs{w}\leq 1.
  \end{displaymath}

  Taking now the constant $c$ also smaller than $1/(8\Theta \kappa)$,
  where $\kappa = \kappa(n)$ is defined in~\eqref{eq:2}, we deduce
  from Theorem~\ref{th:2} the existence and uniqueness of
  $\eta=\eta(a)\in \Hbmo(\mathbf{R}^d)$ and $\theta=\theta(a)\in
  \Hbmo(\mathbf{R}^{n\times d})$ solving~\eqref{eq:19}--\eqref{eq:20}
  and such that
  \begin{displaymath}
    \sqrt{\Hnorm{\bmo}{\eta}^2 + \Hnorm{\bmo}{\theta}^2} \leq
    \frac1{4\kappa\Theta}.
  \end{displaymath}
  Moreover, the maps $a\mapsto \eta(a)$ and $a\mapsto \theta(a)$ of
  $(-\rho,\rho)$ to $\Hbmo$ are analytic, and their power series are
  given by
  \begin{align*}
    \eta(a) &= \sum_{k=1}^\infty \zeta_1^{(k)} a^k, \\
    \theta(a) &= \sum_{k=1}^\infty \zeta_2^{(k)} a^k,
  \end{align*}
  with the coefficients $\zeta^{(k)} = (\zeta_1^{(k)},\zeta_2^{(k)})$,
  $k\geq 1$, in $\Hbmo(\mathbf{R}^{(n+1)\times d})$ determined by
  \eqref{eq:29}--\eqref{eq:30}. The power series expansion for $S$
  follow from Theorem~\ref{th:2}, while the expansions for $\alpha$
  and $\sigma$ follow from the linear invertibility
  relations~\eqref{eq:21}--\eqref{eq:22} between $(\alpha,\sigma)$ and
  $(\eta,\theta)$.
\end{proof}

\bibliographystyle{plainnat}

\bibliography{../bib/finance}

\end{document}